\newtheorem{prop}{Proposition}
\newenvironment{mat}[1]{\left[\begin{array}{#1}}{\end{array}\right]}
\newcommand{\bmx}[1]{\begin{mat}{#1}}
\newcommand{\emx}{\end{mat}}
\newcommand{\gss}[3]{\mbox{\boldmath $#1$}_{#2}^{#3}}
\newcommand{\g}[1]{\mbox{\boldmath $#1$}}
\newcommand{\sbm}[1]{\mbox{\scriptsize $\g{#1}$}}
\newcommand{\sbms}[2]{\mbox{\scriptsize $\gss{#1}{#2}{}$}}
\newcommand{\pref}[1]{(\ref{#1})}
\begin{document}
%
% paper title
\title{On the Design of Channel Estimators for given Signal Estimators and Detectors}
%On the Optimality of Common Channel Estimators for Signal Estimation and Detection
%Are the Common Channel Estimators Optimal for Signal Estimation and Detection?
% author names and IEEE memberships
%
\author{Dimitrios Katselis, Cristian R. Rojas, H\aa kan Hjalmarsson, Mats
Bengtsson, Mikael Skoglund
\thanks{The authors are with ACCESS Linnaeus Center, School of
Electrical Engineering, KTH Royal Institute of Technology, SE
100-44, Stockholm, Sweden. E-mail: {dimitrik@kth.se},
{cristian.rojas@ee.kth.se}, {hjalmars@kth.se},
{mats.bengtsson@ee.kth.se}, {mikael.skoglund@ee.kth.se}.}}

\maketitle

\begin{abstract}
The fundamental task of a digital receiver is to decide the
transmitted symbols in the best possible way, i.e., with respect
to an appropriately defined performance metric. Examples of usual
performance metrics are the probability of error and the Mean
Square Error (MSE) of a symbol estimator. In a coherent receiver,
the symbol decisions are made based on the use of a channel
estimate. This paper focuses on examining the optimality of usual
estimators such as the minimum variance unbiased (MVU) and the minimum mean square error (MMSE) estimators for these metrics and on proposing better
estimators whenever it is necessary. For illustration purposes, this study is performed on a toy channel model, namely a single input single output (SISO) flat fading
 channel with additive white Gaussian noise (AWGN). In this way, this paper highlights
the design dependencies of channel estimators on target performance metrics.
\end{abstract}

\begin{keywords} Minimum mean square error (MMSE), minimum variance unbiased (MVU),
probability of error, single input single output (SISO).
\end{keywords}

\section{Introduction}
\label{sec:intro} \PARstart{S}{ignal} estimation and detection are
two main concerns in the course of designing a communication
system~\cite{c07,g05,pro95}. The main goal is to design optimal
demodulators at the receiver side providing the detector with the
necessary sufficient statistics for its decision on the
transmitted symbol at a specific observation interval.
Furthermore, the optimization of the decision device is also a
target, i.e., its design based on such statistical tests which
rely on sufficient statistics and minimize the probability of
error. A different setup of optimal designs related to radar and
sonar systems is to detect the presence of either a deterministic
or random signal in noise with least probability of error or false
alarm~\cite{k98}. Although the two aforementioned setups have
conceptual differences, they are usually treated in the same
fashion. First, an optimal demodulator is necessary to deliver the
sufficient statistics to the decision device. Then, the decision
device, that optimally uses these sufficient statistics, has to be
derived. The optimal design of the decision device is formulated
in any case as a hypotheses testing problem. Moreover, the
optimization of the transmitter is another related problem. In
this case, the problem turns to be the design of optimal
transmission sets, such that the end performance metric, i.e, the
probability of error is minimized.

Depending on the degree of knowledge about the transmission
channel at the receiver side, the detector can be coherent,
semi-coherent or noncoherent~\cite{pro95}. The more information
about the transmission channel is available, the better the receiver's performance will be. This
justifies the fact that the receivers usually have a built-in
channel estimator. In the communication and signal processing
literature, the usual channel estimators are the minimum variance
unbiased (MVU) and the minimum mean square error (MMSE)
estimators \cite{k93}.  The combination of these channel
estimators with the optimal decision devices is usually considered
to address the problem of determining the optimal receiver.

Current physical layer (PHY) standards that have attracted a lot of attention both from
the mobile industry and the research community
are the Wireless Interoperability for Microwave
Access (WiMAX), the Long Term Evolution (LTE) and the Digital Video Broadcasting (DVB) either in
its terrestrial (DVB-T) or its Handheld (DVB-H) versions \cite{agm07,flypz12,gak12,hb12,pcz12,s11,sy12}. These standards
are orthogonal frequency division multiple access (OFDMA) based and they can satisfy the
need for shorter communication links to provide truly broadband connectivity services. In these systems,
either MVU/least squares (LS) or MMSE channel estimators are used, usually
employing some sort of estimate interpolation through the frame if the goal is to
track a time-varying channel \cite{ase10,chcl03,hxy11,lcc12,lhp09,ls12,smhl02,sacbf10,yryq12}.

In this paper, we re-examine the validity of the common belief
that the MVU and MMSE channel estimators are the best choices to
be combined with the optimal detectors, delivering an overall
optimal receiver, when finite-sample training is used to estimate
the channel\footnote{In this sense, the asymptotic efficiency of
the maximum likelihood (ML) estimator together with its invariance property are
irrelevant.}. To this end, ideas originating from the system
identification field are employed. Recent results in optimal
experiment design indicate that it is better to design the
optimal training for the estimation of a certain set of unknown
parameters with respect to optimizing the end performance metric
rather than the mean square error of the parameter estimator
itself~\cite{bsghh06,h09,jh05,krhb12}. We will slightly modify
this idea and we will examine if the aforementioned channel
estimators are the best choices, when the selection of the channel
estimator is made with respect to an appropriately defined end
performance metric. For illustration purposes, this study is performed on a toy channel model, namely a single input single output (SISO) flat fading
 channel with additive white Gaussian noise (AWGN)\footnote{In this toy model, the MVU estimator coincides with the LS and the ML channel estimators.}. The initial focus is on two
different MSE criteria. These MSE criteria serve to demonstrate
the dependence of the optimal channel estimators on the end
performance metrics. Their choice is based on the simplicity of
the analysis that they allow. Then, using the obtained results, we
will examine the case of the error probability as
the performance metric of interest. We show that for several
performance metrics examined in this paper, the MVU and MMSE
channel estimators are suboptimal, while we propose ways to obtain
better channel estimators. Finally, we numerically compare the
performances of the derived channel estimators with those of the
MVU and MMSE channel estimators for all performance metrics in
this paper. These comparisons verify that the optimality of the
usual channel estimators with respect to common end performance
metrics is questionable.

This paper is organized as follows: Section~\ref{sec:ProbSt}
defines the problem of designing the channel estimator with
respect to the end performance metric. Section~\ref{sec:prelim}
presents some results and comments that will be useful in the rest
of the paper, while it introduces approximations of the
performance metrics that the rest of the analysis will be based
on. The optimality of the MVU and MMSE channel estimators with
respect to the minimization of the symbol estimate MSE is examined in Section~\ref{sec:dMSE} and subsections
therein, while uniformly better channel estimators are also proposed. The
same analysis as in Section~\ref{sec:dMSE} is pursued in
Section~\ref{sec:eMSE} for a differently defined symbol estimate MSE and in
Section~\ref{sec:minPe} for a rough approximation (variation) of the error probability
performance metric. Section~\ref{sec:sims} illustrates the
validity of the derived results. Finally, Section~\ref{sec:concl} concludes
the paper.

\section{Problem Statement}
\label{sec:ProbSt}

The received signal model for a SISO
system, when the channel is considered to be narrowband block
fading, is given as follows:
\begin{equation}
y(n)=hx(n)+w(n), \label{eq:sm}
\end{equation}
where $y(n)$ is the observed signal at the receiver side at time
instant $n$, $h$ is the complex channel impulse response
coefficient, $x(n)$ is the transmitted symbol at the same time
instant taken from an M-ary constellation
$\mathcal{X}=\{x_1,x_2,\ldots, x_M\}$ and $w(n)$ is
complex, circularly symmetric, Gaussian noise with zero mean and
variance $\sigma_w^2$. Given an equiprobable distribution on the
constellation symbols, we further assume that $E[x(n)]=0$ and
$E[|x(n)|^2]=\sigma_x^2$, while our modulation method is
memoryless. In addition, $w(n)$ and $x(n)$ are independent random
sequences, while $w(n)$ is a white random sequence.

Assume that a maximum energy $\mathcal{E}$ and a training length
of $B$ time slots are available at the transmitter for training.
%The receiver uses the optimal channel estimator with respect to
%the final performance metric to estimate the channel, and then it
%uses this channel estimate to estimate or to decide the symbols.
We can collect the received samples corresponding to training in
one vector:
\begin{equation}
\gss{y}{\rm tr}{}=h\gss{x}{\rm tr}{}+\gss{w}{\rm tr}{},
\label{eq:rxTr}
\end{equation}
where $\gss{y}{\rm tr}{}=\left[y(l-B+1), y(l-B+2),\cdots, y(l)
\right]^{T}$ is the vector of $B$ received samples corresponding
to training, $\gss{x}{\rm tr}{}=\left[ x(l-B+1), x(l-B+2), \cdots,
x(l) \right]^{T}$ is the vector of $B$ training symbols and
$\gss{w}{\rm tr}{}=\left[w(l-B+1), w(l-B+2), \cdots, w(l) \right]^{T}$
is the vector of $B$ noise samples. Considering the class of
linear channel estimators, the channel is estimated as follows:
\begin{equation}
\hat{h}=\gss{f}{}{H}\gss{y}{\rm tr}{}=h\gss{f}{}{H}\gss{x}{\rm
tr}{}+\gss{f}{}{H}\gss{w}{\rm tr}{}, \label{eq:chEst}
\end{equation}
where $\g{f}$ is a $B\times 1$ channel estimating filter.

With the assumptions in \pref{eq:sm}, if the constellation symbols
are equiprobable and the channel is perfectly known, the ML
detector is optimal~\cite{c07,pro95}. This is with respect to
minimizing the probability that a different symbol from the one
transmitted is decided given the transmitted symbol. The ML
decision rule is given by the following expression:
\begin{equation}
{\rm dec}\left[x(n)\right](h)=\arg\min_{\hat{x}(n)\in
\mathcal{X}}|y(n)-h\hat{x}(n)|^2. \label{eq:MLdec}
\end{equation}
Here, ${\rm dec}\left[x(n)\right]$ denotes the decision of the
detector, when the transmitted symbol is $x(n)$. In essence, the
ML detector minimizes the probability of error, when the
transmitted symbols are equiprobable. When the receiver has a
channel estimate $\hat{h}$, $h$ is replaced by $\hat{h}$ in the
last expression.

A different kind of performance metric is the MSE of a
\emph{linear} symbol estimator. In this paper, we will call the
symbol estimator an \emph{equalizer}. The equalizer uses the
channel knowledge and delivers a soft decision of the transmitted
symbol, i.e., a symbol estimate. We will call \emph{clairvoyant}
the equalizer that has perfect channel knowledge. Denoting this
equalizer by $\tilde{c}(h)$, we can find its mathematical
expression as follows:
\begin{equation}
\tilde{c}(h)=\arg\min_{c(h)}
E\left[\left|c(h)y(n)-x(n)\right|^2\right], \label{eq:MMSEchoice}
\end{equation}
where the expectation is taken over the statistics of $x(n)$ and
$w(n)$. If we set the derivative of the last expression with
respect to $c(h)$ to zero and we solve for $c(h)$, then the
optimal clairvoyant equalizer is given by the expression
\begin{equation}
\tilde{c}(h)=\frac{\sigma_x^2 h^{*}}{|h|^2\sigma_x^2+\sigma_w^2}.
\label{eq:MMSEeq}
\end{equation}
We will call this the MMSE clairvoyant equalizer. We observe that
as the SNR increases, i.e., $\sigma_{w}^2\rightarrow 0$,
$\tilde{c}(h)\rightarrow 1/h$. We will call $\check{c}(h)=1/h$ the \emph{Zero Forcing}
(ZF) clairvoyant equalizer. Using the above definitions and
assuming that the receiver has only an estimate of the channel,
the system performance metric is the symbol estimate MSE:
\begin{equation}
{\rm
MSE}_{x}=E\left[\left|c(\hat{h})y(n)-x(n)\right|^2\right].\label{eq:MSEdirect}
\end{equation}

The MSE given by \pref{eq:MSEdirect} can be defined in two
different ways: If we assume that the channel is an unknown but
otherwise deterministic quantity, then the expectation in
\pref{eq:MSEdirect} does not consider $h$. This leads to an MSE
expression dependent on the unknown channel $h$. In this case,
only the channel estimators that treat the channel as an unknown
deterministic variable are meaningful. If we assume that the
unknown channel is a random variable, then we can average the MSE
expression over $h$. In this case, both the estimators that treat
the channel as an unknown deterministic variable or as a random
variable are meaningful. The former represents the case where the
system designer chooses to ignore the knowledge of the channel
statistics in the selection of the channel estimator for some
reason.

In the following, we focus on the ZF equalizer, which becomes
optimal as the SNR increases. This choice is made to preserve the
simplicity of this paper and to highlight the derived results.

The previous MSE definition implies the definition of yet another
MSE that is meaningful in the context of communication
systems. Given an equalizer, we can define the excess of the
symbol estimate based on an equalizer that only knows a channel
estimate over the equalizer with perfect channel knowledge, thus leading to
\begin{equation}
{\rm
MSE}_{xe}=E\left[\left|c(\hat{h})y(n)-c(h)y(n)\right|^2\right].\label{eq:MSEexcess}
\end{equation}
In the sequel, this metric will be called \emph{excess} MSE.

 Our goal will be to determine the optimal channel estimators
for fixed training sequences so that each performance metric based
on a given equalizer is minimized. To this end, the following section
presents some useful ideas.

%%\emph{Remark}: The system model (\ref{eq:sm}) can be alternatively defined as follows:
%%\begin{equation}\label{eq:smAlt}
%%y(n)=\sigma_x h x'(n)+w(n)
%%\end{equation}
%%where all variables are defined as before and $x'(n)\in (1/\sigma_x)\mathcal{X}$, i.e., $E[x'(n)]=0$ and $E[|x'(n)|^2]=1$.
%%All the MSEs in this section can be defined with respect to $x'(n)$ instead of $x(n)$. In this case, (\ref{eq:MMSEeq}) becomes
%%\begin{equation}
%%\tilde{c}(h)=\frac{\sigma_x h^{*}}{|h|^2\sigma_x^2+\sigma_w^2}.
%%\label{eq:MMSEeqAlt}
%%\end{equation}
%%and the corresponding ZF equalizer turns out to be $1/(\sigma_x h)$. We have chosen to present our analysis in this paper based on (\ref{eq:sm})-(\ref{eq:MSEexcess})
%%as a more natural presentation framework\footnote{Notice that (\ref{eq:rxTr})-(\ref{eq:chEst}) remain unaffected by this alternative model definition.}. Nevertheless, the simulation section is based on (\ref{eq:smAlt}).

\section{Preliminary Results}
\label{sec:prelim}

Consider the MVU estimator. Since it is an unbiased estimator,
it satisfies $\gss{f}{}{H}\gss{x}{\rm tr}{}=1$. This condition
implies that $E[\hat{h}]=h$. For our problem assumptions, the MVU
estimator can be found by solving the following optimization
problem:
\begin{eqnarray}
&& \min_{\sbm{f}} \sigma_w^2\|\g{f}\|^2\nonumber\\
&& {\rm s.t.}\ \ \gss{f}{}{H}\gss{x}{\rm tr}{}=1.
\end{eqnarray}
Forming the Lagrangian for this problem and zeroing its gradient
with respect to $\gss{f}{}{}$, we get:
\begin{equation}
\gss{f}{\rm MVU}{}=\frac{\gss{x}{\rm tr}{}}{\|\gss{x}{\rm
tr}{}\|^2}.\label{eq:MVUE}
\end{equation}
For the sake of completeness, this estimator coincides with the ML and LS channel estimators
under our assumptions.

If we assume that the prior distribution of $h$ is known, then
instead of the MVU one could use the MMSE channel estimator. With
our assumptions and the extra assumption that $E[h]=0$, one can
obtain~\cite{k93}
\begin{equation}
\gss{f}{\rm MMSE}{}=\frac{E[|h|^2]\gss{x}{\rm
tr}{}}{E[|h|^2]\|\gss{x}{\rm tr}{}\|^2+\sigma_w^2}.\label{eq:MMSE}
\end{equation}

%Using the ZF equalizer, the expressions for ${\rm MSE}_x$ and ${\rm
%MSE}_{xe}$ when the channel is assumed to be an unknown
%deterministic variable or a random
%variable, respectively, are the following:

The ${\rm MSE}_x$ of the ZF equalizer using a deterministic channel (``dc'') assumption is
\begin{equation}
{\rm MSE}_x^{dc}\left({\rm
ZF}\right)=E\left[\left|\frac{\hat{h}-h}{\hat{h}}\right|^2\right]\sigma_x^2+\sigma_w^2E\left[\frac{1}{|\hat{h}|^2}\right],\label{eq:MSEddcZF}
\end{equation}
the corresponding for random channel (``rc'') is:
\begin{equation}
{\rm MSE}_x^{rc}\left({\rm
ZF}\right)=E_h\left[E\left[\left|\frac{\hat{h}-h}{\hat{h}}\right|^2\right]\right]\sigma_x^2+\sigma_w^2E_h\left[E\left[\frac{1}{|\hat{h}|^2}\right]\right],\label{eq:MSEdrcZF}
\end{equation}
while for the ${\rm MSE}_{xe}$ we accordingly have:
\begin{equation}
{\rm MSE}_{xe}^{dc}\left({\rm
ZF}\right)=E\left[\left|\frac{\hat{h}-h}{\hat{h}}\right|^2\right]\left(\sigma_x^2+\frac{\sigma_w^2}{|h|^2}\right)\label{eq:eMSEZF}
\end{equation}
(c.f. \pref{eq:MSEexcess}). The ${\rm MSE}_{xe}^{rc}$ is
obtained by averaging the last expression over $h$.

Depending on the probability distributions of $|\hat{h}|$ and $|h|$,
the above MSE expressions may fail to exist. The
MSEs will be finite if the probability distribution function (pdf)
of $|\hat{h}|$ is of order $O(|\hat{h}|^2)$ as $\hat{h}\rightarrow
0$. A similar condition should hold for the pdf of $|h|$ in the
case of ${\rm MSE}_{xe}^{rc}$. In the opposite case, we end up with an \emph{infinite moment} problem. In order to obtain well-behaved channel estimators that will
be used in conjunction with the actual performance metrics, some
sort of regularization is needed. Some ideas for appropriate
regularization techniques to use may be obtained by modifying
robust estimators (against heavy-tailed distributions), e.g., by
trimming a standard estimator, if it gives a value very close to
zero \cite{hu05}. An example of such a trimmed estimator is given as follows:
\begin{eqnarray}
\hat{h}= \left\{%
\begin{array}{c}
  \gss{f}{}{H}\gss{y}{\rm tr}{},\ \ {\rm if}\ \ |\gss{f}{}{H}\gss{y}{\rm tr}{}|>\lambda\\
  \lambda
\gss{f}{}{H}\gss{y}{\rm tr}{}/|\gss{f}{}{H}\gss{y}{\rm tr}{}|,\ \ {\rm o.w.} \\
\end{array}
\right.\label{eq:WellBehEst}
\end{eqnarray}
where $\g{f}$ can be any estimator and $\lambda$ a regularization
parameter\footnote{This parameter can be tuned via
cross-validation or any other technique, although in the
simulation section we empirically select it for simplicity
purposes.}.

\emph{Remark:} Clearly, the reader may observe that the definition
of the trimmed $\hat{h}$ preserves the continuity at
$|\gss{f}{}{H}\gss{y}{\rm tr}{}|=\lambda$. Additionally, the event
$\{\gss{f}{}{H}\gss{y}{\rm tr}{}=0\}$ has zero probability since
the distribution of $\gss{f}{}{H}\gss{y}{\rm tr}{}$ is continuous.
Therefore, in this case $\hat{h}$ can be arbitrarily defined,
e.g., $\hat{h}=\lambda$.

We focus now on the ${\rm MSE}_{x}^{dc}({\rm ZF})$. Assume a fixed $\lambda$.
In the appendix, we show that,
for a sufficiently small $\lambda$ and a sufficiently high SNR during training,
minimizing ${\rm MSE}_x^{dc}({\rm ZF})$ is equivalent to
minimizing the following approximation
\begin{equation}
\left[{\rm MSE}_x^{dc}\left({\rm
ZF}\right)\right]_0=\frac{E\left[|\hat{h}-h|^2\right]}{E\left[|\hat{h}|^2\right]}\sigma_x^2+\sigma_w^2\frac{1}{E\left[|\hat{h}|^2\right]}.\label{eq:MSEddcZF0}
\end{equation}
Following similar steps and using some minor additional
technicalities, we can work with
\begin{equation}
\left[{\rm MSE}_x^{rc}\left({\rm
ZF}\right)\right]_0=\frac{E_h\left[E\left[|\hat{h}-h|^2\right]\right]}{E_h\left[E\left[|\hat{h}|^2\right]\right]}\sigma_x^2+\sigma_w^2\frac{1}{E_h\left[E\left[|\hat{h}|^2\right]\right]},\label{eq:MSEdrcZF0}
\end{equation}
instead of ${\rm MSE}_x^{rc}\left({\rm ZF}\right)$. Moreover,
$\left[{\rm MSE}_{xe}^{dc}\left({\rm ZF}\right)\right]_0$ and
$\left[{\rm MSE}_{xe}^{rc}\left({\rm ZF}\right)\right]_0$ can be
defined accordingly. We will call the last approximations
\emph{zeroth order} symbol estimate MSEs and excess MSEs,
respectively. The following analysis and results will be based on
the zeroth order  metrics and they will reveal the dependency
of the channel estimator's selection on the considered (any) end
performance metric.

\emph{Remarks:}
\begin{enumerate}
\item A useful, alternative way to consider the zeroth order MSEs is to view
them as affine versions of normalized channel MSEs, where the
actual true channel is $\hat{h}$ and the estimator is $h$.
\item In the definition of (\ref{eq:MSEddcZF0}), one can observe that after approximating the mean value of the ratio by the ratio of the mean values the infinite moment problem is eliminated. In the following, all zeroth order metrics will be defined based on the \emph{non-trimmed} $\hat{h}$ to ease the derivations. This treatment is approximately valid when $\lambda$ is sufficiently small as it is actually shown in eq. (\ref{eq:MSExdcApp}) of the appendix.
%\item With respect to the last remark in Section~\ref{sec:ProbSt} all the above definitions and
%results hold unchanged with the only difference that all the MSEs are scaled by $1/\sigma_x^2$. In this sense,
%the MSEs are decreasing functions of $\sigma_x$ as it is intuitively expected. Furthermore, all the critical derived results in the sequel
%are \emph{invariant} to this scaling.
\end{enumerate}

\section{Minimizing the zeroth order Symbol Estimate MSE}
\label{sec:dMSE}

We now examine the zeroth order symbol estimate MSE in the case of
the ZF equalizer. The optimality of the MVU and MMSE channel
estimators will be investigated. Additionally, the training sequence
is assumed fixed.

\subsection{ZF Equalization}
\label{subsec:dMSEdcZF}

The channel is considered either deterministic or random,
depending on the available knowledge of a priori channel
statistics and the will of the system designer to ignore or to
exploit this knowledge.

\subsubsection{Deterministic Channel}
\label{subsec:dMSEdcZF-1}

The expectation operators in Eq. \pref{eq:MSEddcZF0} are with
respect to $\gss{w}{\rm tr}{},x(n)$ and $w(n)$. We have:
\begin{eqnarray}
\left[{\rm MSE}_x^{dc}\left({\rm
ZF}\right)\right]_0=\sigma_x^2\frac{\left[|h|^2\left|\sbm{f}^{H}\sbms{x}{\rm
tr}-1\right|^2+\sigma_w^2\left\|\sbm{f}\right\|^2\right]+\frac{\sigma_w^2}{\sigma_x^2}}{|h|^2\left|\sbm{f}^{H}\sbms{x}{\rm
tr}\right|^2+\sigma_w^2\left\|\sbm{f}\right\|^2}.
\end{eqnarray}
The numerator of the gradient of the above expression with respect
to $\gss{f}{}{}$ discarding the outer $\sigma_x^2$ is given by the
following expression\footnote{Necessary (hermitian) transpositions
take place, since checking the possibility of zeroing the
numerator by choosing $\g{f}$ is not affected by these
operations.}:
\begin{eqnarray}
&&\left[|h|^2|\varphi|^2+\sigma_w^2\|\g{f}\|^2\right]\left[|h|^2\left(\varphi-1\right)^{*}\gss{x}{\rm
tr}{}+\sigma_w^2\g{f}\right]\nonumber\\
&&-\left[|h|^2\varphi^{*}\gss{x}{\rm
tr}{}+\sigma_w^2\g{f}\right]\left[\frac{\sigma_w^2}{\sigma_x^2}+|h|^2\left|\varphi-1\right|^2+\sigma_w^2\|\g{f}\|^2\right],\nonumber\\
\label{eq:ddcZFnom}
\end{eqnarray}
where $\varphi=\gss{f}{}{H}\gss{x}{\rm tr}{}$. Setting
$\g{f}=\gss{f}{\rm MVU}{}$, we obtain:
\begin{equation}
-\frac{\sigma_w^2}{\sigma_x^2}\left[|h|^2\gss{x}{\rm
tr}{}+\sigma_w^2\frac{\gss{x}{\rm tr}{}}{\|\gss{x}{\rm
tr}{}\|^2}\right]\neq \g{0}. \label{eq:nomMVUE}
\end{equation}
Note that no choice of $\gss{x}{\rm tr}{}$ will zero this
expression for any $|h|^2, \sigma_w^2$. Therefore, the MVU is not
an optimal channel estimator in this case. We can state this
result more formally:
\begin{prop}\label{thrm:1}
The MVU estimator is \emph{not} an optimal channel estimator for
the task of minimizing $\left[{\rm MSE}_x^{dc}\left({\rm
ZF}\right)\right]_0$, when the channel is considered deterministic
but otherwise an unknown quantity.
\end{prop}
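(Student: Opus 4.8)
The plan is to show that the gradient of $\left[{\rm MSE}_x^{dc}\left({\rm ZF}\right)\right]_0$ with respect to $\g{f}$, evaluated at $\g{f}=\gss{f}{\rm MVU}{}$, is nonzero for every choice of the training vector $\gss{x}{\rm tr}{}$ and every admissible pair $(|h|^2,\sigma_w^2)$. Since the objective is a smooth function of $\g{f}$ (away from the point where the denominator vanishes), a necessary condition for optimality is that the gradient vanish at the optimum; exhibiting a nonvanishing gradient at $\gss{f}{\rm MVU}{}$ therefore immediately rules out optimality.

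The first step is to record the gradient. Starting from the quotient form of $\left[{\rm MSE}_x^{dc}\left({\rm ZF}\right)\right]_0$ displayed above, differentiating via the quotient rule and stripping the harmless outer $\sigma_x^2$ and the (nonzero, real, positive) squared-denominator factor, one obtains the numerator expression \pref{eq:ddcZFnom} in terms of $\varphi=\gss{f}{}{H}\gss{x}{\rm tr}{}$. The second step is to substitute $\g{f}=\gss{f}{\rm MVU}{}=\gss{x}{\rm tr}{}/\|\gss{x}{\rm tr}{}\|^2$, which by construction satisfies the unbiasedness constraint $\varphi=\gss{f}{}{H}\gss{x}{\rm tr}{}=1$. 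Plugging $\varphi=1$ into \pref{eq:ddcZFnom}, the terms containing the factor $(\varphi-1)$ drop out, and after collecting the surviving terms the whole expression collapses to
\begin{equation}
-\frac{\sigma_w^2}{\sigma_x^2}\left[|h|^2\gss{x}{\rm tr}{}+\sigma_w^2\frac{\gss{x}{\rm tr}{}}{\|\gss{x}{\rm tr}{}\|^2}\right],
\end{equation}
as in \pref{eq:nomMVUE}. The third step is the elementary observation that this vector equals $-\frac{\sigma_w^2}{\sigma_x^2}\bigl(|h|^2+\sigma_w^2/\|\gss{x}{\rm tr}{}\|^2\bigr)\gss{x}{\rm tr}{}$, which is a strictly negative scalar times $\gss{x}{\rm tr}{}$; since $\gss{x}{\rm tr}{}\neq\g{0}$ (it carries the training energy $\mathcal{E}>0$) and $\sigma_w^2>0$, this is never the zero vector, regardless of $\gss{x}{\rm tr}{}$, $|h|^2\ge 0$, or $\sigma_w^2>0$.

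The main subtlety — really the only one — is making sure the reduction from ``gradient nonzero'' to ``not optimal'' is clean: one must note that at $\g{f}=\gss{f}{\rm MVU}{}$ the denominator $|h|^2|\varphi|^2+\sigma_w^2\|\g{f}\|^2=|h|^2+\sigma_w^2/\|\gss{x}{\rm tr}{}\|^2$ is strictly positive, so $\left[{\rm MSE}_x^{dc}\left({\rm ZF}\right)\right]_0$ is differentiable there and the stationarity condition is genuinely necessary; and that stripping the (positive) squared-denominator factor and the (hermitian) transpositions does not affect whether the numerator can be zeroed, as already flagged in the footnote to \pref{eq:ddcZFnom}. With those remarks in place the proposition follows. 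I would phrase the conclusion as: since the gradient of the zeroth order deterministic-channel symbol-estimate MSE does not vanish at $\gss{f}{\rm MVU}{}$ for any training sequence and any $(|h|^2,\sigma_w^2)$, the MVU estimator cannot be a minimizer, which is exactly the claim.
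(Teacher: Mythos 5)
Your proposal is correct and follows essentially the same route as the paper: substitute $\gss{f}{\rm MVU}{}$ (so $\varphi=1$) into the gradient numerator \pref{eq:ddcZFnom}, reduce it to the nonzero vector \pref{eq:nomMVUE}, which is a strictly negative multiple of $\gss{x}{\rm tr}{}$, and conclude that the stationarity condition fails. Your added remarks on differentiability at $\gss{f}{\rm MVU}{}$ and on why discarding the positive denominator factor is harmless are sound and only make explicit what the paper leaves implicit.
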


The question that arises in this case is how to find the optimal
channel estimator in this setup or generally how to determine a
uniformly better channel estimator for minimizing $\left[{\rm
MSE}_x^{dc}\left({\rm ZF}\right)\right]_0$. Equating
\pref{eq:ddcZFnom} to $\g{0}$ and taking the inner product of both
sides with $\g{f}$, we obtain the following necessary condition
that every optimal channel estimating filter $\g{f}$ must satisfy
given the training sequence $\gss{x}{\rm tr}{}$:
\begin{equation}
\gss{f}{}{H}\gss{x}{\rm tr}{}=\left(1+\frac{\sigma_w^2}{\sigma_x^2
|h|^2}\right).\label{eq:ncZF}
\end{equation}
A possible $\g{f}$ that satisfies this condition is
\begin{equation}
\gss{f}{\rm opt}{dc{\rm ZF}}=\left(1+\frac{\sigma_w^2}{\sigma_x^2
|h|^2}\right)\frac{\gss{x}{\rm tr}{}}{\|\gss{x}{\rm
tr}{}\|^2}=\left(1+\frac{\sigma_w^2}{\sigma_x^2
|h|^2}\right)\gss{f}{\rm MVU}{},\label{eq:optfZF}
\end{equation}
which becomes:
\begin{equation}
f_{\rm opt}^{dc{\rm ZF}}=\left(1+\frac{\sigma_w^2}{\sigma_x^2
|h|^2}\right)\frac{1}{x_{\rm tr}^{*}}\label{eq:ZF-MVUE-opt}
\end{equation}
for $B=1$. Clearly, (\ref{eq:optfZF}) is sufficient for (\ref{eq:ddcZFnom}) to become zero. However, (\ref{eq:optfZF}) has another problem, namely that
the optimal solution depends on the unknown channel $h$.

In order to deal with the dependence of the optimal estimator on
the unknown channel, we will resort to a stochastic approach. We
will assume a \emph{noninformative} prior distribution for the
unknown channel. If the real and imaginary parts of the channel
are considered bounded in the intervals $\mathcal{I}_{R}\subset
\mathbb{R}$ and $\mathcal{I}_{I}\subset \mathbb{R},$
\footnote{This assumption is usually reasonable in practice.} then
the receiver can treat them as independent random variables
uniformly distributed on $\mathcal{I}_R$ and $\mathcal{I}_I$,
respectively. The $\left[{\rm MSE}_{x}^{dc}({\rm ZF})\right]_0$ is
now replaced by $E_h^{ud}\left\{\left[{\rm MSE}_{x}^{dc}({\rm
ZF})\right]_0\right\}$, where $E_h^{ud}[\cdot]$ denotes the
expectation with respect to the joint (uniform) distribution of
the real and imaginary parts of $h$. Applying again the zeroth
order approximation and following the above analysis\footnote{Part
of this analysis is presented in
Subsection~\ref{subsec:dMSErcZF}.} we can easily show that the
eqs. \pref{eq:ncZF}, \pref{eq:optfZF} and \pref{eq:ZF-MVUE-opt}
give again the necessary condition and optimal estimators in this
case with the substitution of $|h|^2$ by $E_{h}^{ud}[|h|^2]$.

\subsubsection{Random channel}
\label{subsec:dMSErcZF}

In this case, the actual prior statistics of the channel are
known. The zeroth order symbol estimate MSE is given by
\begin{equation}
\left[{\rm MSE}_x^{ rc}({\rm ZF})\right]_0=
\sigma_x^2\frac{\left[E[|h|^2]\left|\sbm{f}^{H}\sbms{x}{\rm
tr}-1\right|^2+\sigma_w^2\left\|\sbm{f}\right\|^2\right]+\frac{\sigma_w^2}{\sigma_x^2}}{E[|h|^2]\left|\sbm{f}^{H}\sbms{x}{\rm
tr}\right|^2+\sigma_w^2\left\|\sbm{f}\right\|^2}.
\end{equation}
Differentiating this expression with respect to $\gss{f}{}{}$, we
get the numerator of the gradient which is given by
\pref{eq:ddcZFnom}\footnote{ignoring all the positive scaling
terms}, but with $|h|^2$ replaced by $E[|h|^2]$. It can be easily
shown that this numerator is different from zero if
$\g{f}=\gss{f}{\rm MVU}{}$ or $\g{f}=\gss{f}{\rm MMSE}{}$. We
therefore have a formal statement of this result:
\begin{prop}
The MVU and MMSE estimators are \emph{not} optimal channel
estimators for the task of minimizing $\left[{\rm MSE}_x^{
rc}({\rm ZF})\right]_0$, when the prior channel distribution is
known.\label{thrm:2}
\end{prop}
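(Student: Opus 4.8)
The plan is to mirror the computation already carried out for Proposition~\ref{thrm:1}, transferring it verbatim to the random-channel objective. The key structural observation is that the gradient numerator for $\left[{\rm MSE}_x^{rc}({\rm ZF})\right]_0$ is exactly \pref{eq:ddcZFnom} with $|h|^2$ replaced everywhere by $E[|h|^2]$; this is stated in the excerpt and follows because the objective has the same rational form in $\g{f}$, with $|h|^2$ appearing only as a fixed positive scalar coefficient. So it suffices to substitute each candidate filter into that modified expression and show it does not vanish.

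First I would set $\g{f}=\gss{f}{\rm MVU}{}=\gss{x}{\rm tr}{}/\|\gss{x}{\rm tr}{}\|^2$, so that $\varphi=\gss{f}{}{H}\gss{x}{\rm tr}{}=1$. Exactly as in \pref{eq:nomMVUE}, the first bracketed term in \pref{eq:ddcZFnom} collapses because $(\varphi-1)^{*}=0$, and what survives is
\begin{equation}
-\frac{\sigma_w^2}{\sigma_x^2}\left[E[|h|^2]\gss{x}{\rm tr}{}+\sigma_w^2\frac{\gss{x}{\rm tr}{}}{\|\gss{x}{\rm tr}{}\|^2}\right],\nonumber
\end{equation}
which is a strictly negative multiple of $\gss{x}{\rm tr}{}\neq\g{0}$, hence nonzero for every training sequence and every $E[|h|^2],\sigma_w^2>0$. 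This already proves the MVU part.

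Next I would set $\g{f}=\gss{f}{\rm MMSE}{}=E[|h|^2]\gss{x}{\rm tr}{}/(E[|h|^2]\|\gss{x}{\rm tr}{}\|^2+\sigma_w^2)$. Here $\varphi=E[|h|^2]\|\gss{x}{\rm tr}{}\|^2/(E[|h|^2]\|\gss{x}{\rm tr}{}\|^2+\sigma_w^2)\in(0,1)$, a known scalar, and $\g{f}$ is a scalar multiple of $\gss{x}{\rm tr}{}$; call it $\g{f}=\alpha\gss{x}{\rm tr}{}$ with $\alpha=E[|h|^2]/(E[|h|^2]\|\gss{x}{\rm tr}{}\|^2+\sigma_w^2)$. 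Substituting into \pref{eq:ddcZFnom}, every term is a scalar multiple of $\gss{x}{\rm tr}{}$, so the whole numerator reduces to $\kappa\,\gss{x}{\rm tr}{}$ for an explicit scalar $\kappa$ depending on $E[|h|^2],\sigma_x^2,\sigma_w^2,\|\gss{x}{\rm tr}{}\|^2$; collecting terms gives, after cancellation of the common factor $[E[|h|^2]|\varphi|^2+\sigma_w^2\|\g{f}\|^2]$ versus the bracket it multiplies, a residual proportional to the extra $\sigma_w^2/\sigma_x^2$ term plus curvature terms in $(\varphi-1)$. The claim is that $\kappa\neq 0$ for all admissible parameter values; I would verify this by writing $\kappa$ out and observing it is a sum of strictly positive quantities (the $\sigma_w^4/\sigma_x^2$-type term never cancels), analogously to how the $\sigma_w^2/\sigma_x^2$ term survived in the MVU case.

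The main obstacle is the MMSE computation: unlike MVU, the shrinkage factor $\varphi\neq 1$ so the $(\varphi-1)$ terms do not disappear, and one must actually expand \pref{eq:ddcZFnom} and check that no fortuitous cancellation makes $\kappa$ vanish for some special $E[|h|^2]$, $\mathrm{SNR}$, or $\|\gss{x}{\rm tr}{}\|^2$. I would handle this by exploiting that $\gss{f}{\rm MMSE}{}$ satisfies the stationarity condition of a \emph{different} objective (normalized channel MSE), so plugging it into the gradient of the zeroth-order symbol MSE necessarily leaves the mismatch between the two objectives — concretely, the term $\sigma_w^2/\sigma_x^2$ in the numerator of the objective, which is absent from the channel-MSE problem, is precisely what prevents $\kappa=0$. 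Invoking the necessary condition \pref{eq:ncZF} (with $|h|^2\to E[|h|^2]$) gives a clean shortcut: any optimal filter must satisfy $\gss{f}{}{H}\gss{x}{\rm tr}{}=1+\sigma_w^2/(\sigma_x^2 E[|h|^2])>1$, whereas $\gss{f}{\rm MVU}{}$ gives $\varphi=1$ and $\gss{f}{\rm MMSE}{}$ gives $\varphi<1$; neither meets the condition, so neither can be optimal. I would present the proof this way — derive \pref{eq:ncZF} with the substitution once, then note both candidates violate it — which avoids the brute-force expansion entirely.
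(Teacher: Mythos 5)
Your proposal is correct and rests on the same machinery as the paper: the paper's own proof of Proposition~\ref{thrm:2} is simply to substitute $\gss{f}{\rm MVU}{}$ and $\gss{f}{\rm MMSE}{}$ into \pref{eq:ddcZFnom} with $|h|^2\to E[|h|^2]$ and observe the result is nonzero (it states this can be ``easily shown'' without displaying the MMSE computation). Your MVU half reproduces that verbatim, landing on \pref{eq:nomMVUE} with the substitution. Where you genuinely diverge is the MMSE half: rather than expanding \pref{eq:ddcZFnom} at $\g{f}=\alpha\gss{x}{\rm tr}{}$ and checking that the residual scalar $\kappa$ never vanishes, you route both candidates through the necessary condition \pref{eq:ncZF}, i.e.\ $\gss{f}{}{H}\gss{x}{\rm tr}{}=1+\sigma_w^2/(\sigma_x^2E[|h|^2])>1$, and note that MVU gives $\varphi=1$ while MMSE gives $\varphi<1$. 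This is valid: the condition is obtained by taking the inner product of the vanishing gradient numerator with $\g{f}$ and dividing by the strictly positive factor $E[|h|^2]|\varphi|^2+\sigma_w^2\|\g{f}\|^2$ (legitimate since both candidate filters are nonzero), and since the objective is smooth and unconstrained, any minimizer must be stationary, so violating the necessary condition disqualifies both. The paper derives \pref{eq:ncZF} only to \emph{construct} $\gss{f}{\rm opt}{}$, not to rule out MVU/MMSE; your reuse of it is a cleaner, unified disqualification test that spares the reader the brute-force MMSE expansion and makes transparent \emph{why} no fortuitous cancellation can occur (the stationarity condition pins down $\varphi$ to a single real value exceeding~$1$, which neither shrinkage factor attains).
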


The optimal channel estimator $\gss{f}{\rm opt}{rc{\rm ZF}}$
satisfies \pref{eq:ncZF}, \pref{eq:optfZF} and
\pref{eq:ZF-MVUE-opt}, but with $|h|^2$ replaced by $E[|h|^2]$.

 %%\emph{Remark:} The ZF equalizer generates ill-behaved MSEs when the actual channel
%% assumes a value close to zero and the optimal estimators are also close to zero. This is due to
%% its form, namely the division by $\hat{h}$. Therefore, the MVU, MMSE and the optimal estimators obtained both for the case of
%% the deterministic, as well as, the random channel appear to be ill-behaved in practice.
%% In Subsection~\ref{subsec:WellBehaved} we propose a way to resolve this problem.

\emph{Remarks:}
\begin{enumerate}
\item Considering (\ref{eq:optfZF}) and the corresponding expression for the random channel case, we observe that
the design of the estimator with respect to the end performance metric introduces a bias to the MVU estimator in the form of scaling, leading to
a smaller value of the end performance metric than the one that we would obtain by using the MVU estimator. This bias introduction mechanism
has similarities with the introduction of bias in estimators to reduce their MSE (the MSE here is the average square distance of the parameter estimator from the true
value of the parameter)~\cite{e08}. Nevertheless, the reader may observe the conceptual differences in the motivation and goals behind the end performance metric estimator designs presented in this paper and the ideas in~\cite{e08}.
\item The claimed optimality of the derived estimators in this section but also in this paper is with respect to the zeroth order performance metrics.
These estimators turn out to be uniformly better than the MVU and MMSE estimators also when comparing against the \emph{true} end performance metrics as we demonstrate in the simulation section.
\item An alternative way to express eq. (\ref{eq:optfZF}) is
\begin{equation}
\gss{f}{\rm opt}{dc{\rm ZF}}=\left(1+\alpha\right)\gss{f}{\rm MVU}{},\label{eq:optfZF_SNR}
\end{equation}
where $\alpha=\sigma_w^2/(\sigma_x^2
|h|^2)$ is the inverse SNR at the recceiver side. Depending on how we implement the last estimator in practice, $\alpha$ turns to a tuning parameter controlling the
introduction of bias in the MVU estimator. We numerically demonstrate this very interesting aspect of the derived estimators in Figs. \ref{fig:ZFMSE_biased} and \ref{fig:ZFexMSE_biased}.
%This bias is \emph{correct} with respect to the zeroth order performance metrics but may not be the best possible one when the estimators are compared against the %true end performance metric. Specifically, when using the previously described stochastic approach, $E_{h}^{ud}[|h|^2]$ can be set to such a value that %(\ref{eq:optfZF}) with $E_{h}^{ud}[|h|^2]$ performs better than (\ref{eq:optfZF}) with $E[|h|^2]$ when the estimators are compared against the \emph{true} end %performance metric.
\end{enumerate}
\subsection{Discussion on the Optimal Training}

Since the channel estimator is selected in order to optimize the
final performance metric of the communication system, one may
consider the problem of selecting optimally the training vector
$\gss{x}{\rm tr}{}$ under a training energy constraint $\|\gss{x}{tr}{}\|^2\leq \mathcal{E}$ to serve the same purpose. To optimize the
training vector, one should first fix the channel estimator. This
is a ``complementary'' problem with respect to the approach that we have
followed so far. Suppose that we use either the MVU or the MMSE
channel estimators. One can observe that for $B=1$ the problem of
selecting optimally the training vector is meaningless. Therefore,
we will end up using an inferior channel estimator (i.e., the MVU or the MMSE) than the one
given by \pref{eq:ZF-MVUE-opt} and its random channel counterpart.
In the case that $B>1$, fixing for example $\g{f}=\gss{f}{\rm
MVU}{}$ one can observe that again the problem of selecting
optimally the training vector is meaningless. Consider for example
the case of $\left[{\rm MSE}_x^{rc}\left({\rm
ZF}\right)\right]_0$. We then have:
\[
\left[{\rm MSE}_x^{rc}\left({\rm
ZF}\right)\right]_0=\frac{\frac{\sigma_x^2\sigma_w^2}{\|\gss{x}{tr}{}\|^2}+\sigma_w^2}{E[|h|^2]+\frac{\sigma_w^2}{\|\gss{x}{tr}{}\|^2}},
\]
which only depends on $\|\gss{x}{tr}{}\|^2$.
Furthermore, setting $\theta=\|\gss{x}{tr}{}\|^2$, it follows that
$d\left[{\rm MSE}_x^{rc}\left({\rm ZF}\right)\right]_0/d\theta<0$ at sufficiently high SNR,
i.e., $\left[{\rm MSE}_x^{rc}\left({\rm ZF}\right)\right]_0$ is
minimized when $\|\gss{x}{tr}{}\|^2=\mathcal{E}$, which is
intuitively appealing. Therefore, any $\gss{x}{tr}{}$ with energy
equal to $\mathcal{E}$ is an equally good training vector for the
MVU estimator. Thus, for the same $\gss{x}{tr}{}$, the estimator
$\g{f}=\gss{f}{\rm opt}{rc{\rm ZF}}$ will be better than the MVU.
Similar conclusions can be reached for the MMSE estimator, as
well.

\section{Minimizing the Zeroth Order Excess MSE}
\label{sec:eMSE}

We now examine the zeroth order excess MSE in the case of the ZF
equalizer.

\subsection{ZF Equalizer with a deterministic channel}
\label{subsec:eMSEdcZF}

In this case, we have:
\begin{equation}
\left[{\rm MSE}_{xe}^{dc}\left({\rm ZF}\right)\right]_0=
\frac{|h|^2\left|\sbm{f}^{H}\sbms{x}{\rm
tr}-1\right|^2+\sigma_w^2\left\|\sbm{f}\right\|^2}{|h|^2\left|\sbm{f}^{H}\sbms{x}{\rm
tr}\right|^2+\sigma_w^2\left\|\sbm{f}\right\|^2}\left(\sigma_x^2+\frac{\sigma_w^2}{|h|^2}\right)
\end{equation}
The numerator of the gradient of the above expression with respect
to\footnote{discarding the positive scalars and considering again
the corresponding (hermitian) transpositions.} $\gss{f}{}{}$ is
given by the following expression:
\begin{eqnarray}
&&\left[|h|^2|\varphi|^2+\sigma_w^2\|\g{f}\|^2\right]\left[|h|^2\left(\varphi-1\right)^{*}\gss{x}{\rm
tr}{}+\sigma_w^2\g{f}\right]\nonumber\\
&&-\left[|h|^2\varphi^{*}\gss{x}{\rm
tr}{}+\sigma_w^2\g{f}\right]\left[|h|^2\left|\varphi-1\right|^2+\sigma_w^2\|\g{f}\|^2\right]\nonumber\\
\label{eq:eZFnom}
\end{eqnarray}
Setting $\g{f}=\gss{f}{\rm MVU}{}$, one can easily check that the
above expression becomes zero. Therefore:
\begin{prop}
The MVU \emph{is} an optimal channel estimator for the task of
minimizing $\left[{\rm MSE}_{xe}^{dc}\left({\rm
ZF}\right)\right]_0$, when the channel is considered a deterministic
but otherwise unknown quantity.
\end{prop}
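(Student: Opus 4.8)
The plan is to establish optimality of the MVU filter \emph{globally}, not merely to check that the gradient vanishes at it. Since the multiplicative factor $\left(\sigma_x^2+\sigma_w^2/|h|^2\right)$ is a positive constant not involving $\g{f}$, minimizing $\left[{\rm MSE}_{xe}^{dc}\left({\rm ZF}\right)\right]_0$ is equivalent to minimizing the ratio
\begin{equation}
R(\g{f})=\frac{|h|^2\left|\varphi-1\right|^2+\sigma_w^2\|\g{f}\|^2}{|h|^2\left|\varphi\right|^2+\sigma_w^2\|\g{f}\|^2},\qquad \varphi:=\gss{f}{}{H}\gss{x}{\rm tr}{}.
\end{equation}
Using $\left|\varphi\right|^2-\left|\varphi-1\right|^2=2\Re\varphi-1$, I would first rewrite $R(\g{f})=1-|h|^2\left(2\Re\varphi-1\right)/\left(|h|^2|\varphi|^2+\sigma_w^2\|\g{f}\|^2\right)$, which turns the task into \emph{maximizing} the subtracted fraction; in particular any $\g{f}$ with $2\Re\varphi\le 1$ gives $R\ge 1$ and can be discarded, since $\g{f}=\gss{f}{\rm MVU}{}$ already yields $R<1$.

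Next I would split the two degrees of freedom carried by $\g{f}$, namely the complex value $\varphi$ and the norm $\|\g{f}\|$. For fixed $\varphi$ (which by the above we may take with $2\Re\varphi>1$) the numerator $2\Re\varphi-1$ is fixed, so $R$ is smallest when $\|\g{f}\|$ is as small as possible subject to $\gss{f}{}{H}\gss{x}{\rm tr}{}=\varphi$; by Cauchy--Schwarz this least-norm filter is $\g{f}=\varphi^{*}\gss{x}{\rm tr}{}/\|\gss{x}{\rm tr}{}\|^2$, with $\|\g{f}\|^2=|\varphi|^2/\|\gss{x}{\rm tr}{}\|^2$ (and it is the \emph{only} feasible filter when $B=1$). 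Substituting, the problem collapses to the scalar maximization of
\begin{equation}
\frac{2\Re\varphi-1}{\left(|h|^2+\sigma_w^2/\|\gss{x}{\rm tr}{}\|^2\right)|\varphi|^2}
\end{equation}
over $\varphi\in\mathbb{C}$. Writing $\varphi=u+\mathrm{i}v$ with $u,v$ real, the numerator is independent of $v$ while the denominator is increasing in $v^2$, so $v=0$ at the optimum; and a one-line derivative of $(2u-1)/u^2$ shows the maximum over real $u$ is attained uniquely at $u=1$. Hence the optimizing $\varphi$ is $1$, the optimizing filter is $\gss{x}{\rm tr}{}/\|\gss{x}{\rm tr}{}\|^2=\gss{f}{\rm MVU}{}$, and the proposition follows (with uniqueness as a bonus).

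As a consistency check one substitutes $\g{f}=\gss{f}{\rm MVU}{}$ --- equivalently $\varphi=1$, $\gss{x}{\rm tr}{}=\|\gss{x}{\rm tr}{}\|^2\g{f}$, $\|\g{f}\|^2=1/\|\gss{x}{\rm tr}{}\|^2$ --- into the gradient numerator \pref{eq:eZFnom}: both bracketed products reduce to $\left(|h|^2+\sigma_w^2/\|\gss{x}{\rm tr}{}\|^2\right)\sigma_w^2\g{f}$, so the gradient vanishes, in agreement with the above. I would stress, however, that this last computation alone is \emph{not} a proof of optimality: $R$ is a ratio of quadratic forms and is not convex in $\g{f}$, so a vanishing gradient yields only a stationary point. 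The genuine content --- and the only part of the argument needing any care --- is the two-step reduction above (the least-norm step, then the collapse to the one-dimensional maximization), which is exactly what upgrades ``stationary point'' to ``global minimizer''.
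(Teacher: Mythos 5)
Your argument is correct, and it goes further than the paper's own proof. The paper simply substitutes $\g{f}=\gss{f}{\rm MVU}{}$ into the gradient numerator \pref{eq:eZFnom} and observes that it vanishes — exactly the computation you relegate to a ``consistency check'' — and then declares optimality. Your two-step reduction (drop the positive factor $\sigma_x^2+\sigma_w^2/|h|^2$; write $R=1-|h|^2(2\Re\varphi-1)/(|h|^2|\varphi|^2+\sigma_w^2\|\g{f}\|^2)$; for fixed $\varphi$ take the least-norm filter $\varphi^{*}\gss{x}{\rm tr}{}/\|\gss{x}{\rm tr}{}\|^2$ via Cauchy--Schwarz; then maximize $(2u-1)/u^2$ over real $u$) is sound at every step and upgrades the conclusion from ``stationary point of a nonconvex ratio of quadratics'' to ``unique global minimizer,'' which is what the word ``optimal'' in the proposition actually requires. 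Your criticism of the gradient-only route is therefore well taken: the paper uses the same first-order reasoning throughout (e.g.\ in Propositions 1 and 2 it argues \emph{non}-optimality from a nonvanishing gradient, which is unobjectionable, but here it infers optimality from a vanishing one, which strictly speaking it does not justify). What the paper's approach buys is brevity and uniformity with the surrounding sections; what yours buys is an actual proof of global optimality plus uniqueness, at the cost of a page of elementary manipulation. The only cosmetic caveat is that $\g{f}=\g{0}$ must be excluded for $R$ to be defined, but that filter yields no channel estimate at all and is implicitly outside the feasible set.
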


\emph{Remark:} Note that even if $\left[{\rm MSE}_{xe}^{
dc}\left({\rm ZF}\right)\right]_0$ depends on the unknown channel
$h$, the optimal channel estimator does not in this case.

\subsection{ZF Equalizer with a random channel}
\label{subsec:eMSErcZF}

In this case, the prior statistics of the channel are known. The
zeroth order excess MSE is given by:
\begin{eqnarray}
\left[{\rm MSE}_{xe}^{rc}(ZF)\right]_0&=&\frac{\left|\varphi-1\right|^2(E[|h|^4]\sigma_x^2+E[|h|^2]\sigma_w^2)}{E[|h|^4]|\varphi|^2+\sigma_w^2\left\|\sbm{f}\right\|^2
E[|h|^2]}\nonumber\\
&&+\frac{\sigma_w^2\left\|\sbm{f}\right\|^2(E[|h|^2]\sigma_x^2+\sigma_w^2)}{E[|h|^4]|\varphi|^2+\sigma_w^2\left\|\sbm{f}\right\|^2
E[|h|^2]}
\end{eqnarray}
Differentiating this expression w.r.t. $\gss{f}{}{}$ and setting
$\g{f}=\gss{f}{\rm MVU}{}$ we zero the gradient. Therefore:
\begin{prop}
The MVU \emph{is} an optimal channel estimator for the task of
minimizing $\left[{\rm MSE}_{xe}^{rc}(ZF)\right]_0$, when the
channel is considered random.
\end{prop}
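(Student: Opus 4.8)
\emph{Proof plan.} The approach parallels the treatment of $\left[{\rm MSE}_{xe}^{dc}\left({\rm ZF}\right)\right]_0$ in Subsection~\ref{subsec:eMSEdcZF}. Writing $\varphi=\gss{f}{}{H}\gss{x}{\rm tr}{}$, $\mu_2=E[|h|^2]$ and $\mu_4=E[|h|^4]$, I would first put the objective in the form $N(\g{f})/D(\g{f})$ with $N=|\varphi-1|^2(\mu_4\sigma_x^2+\mu_2\sigma_w^2)+\sigma_w^2\|\g{f}\|^2(\mu_2\sigma_x^2+\sigma_w^2)$ and $D=\mu_4|\varphi|^2+\sigma_w^2\mu_2\|\g{f}\|^2$. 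Using the Wirtinger identities $\partial\varphi/\partial\gss{f}{}{*}=\gss{x}{\rm tr}{}$, $\partial\varphi^{*}/\partial\gss{f}{}{*}=\g{0}$ and $\partial\|\g{f}\|^2/\partial\gss{f}{}{*}=\g{f}$, one obtains $\partial N/\partial\gss{f}{}{*}=(\mu_4\sigma_x^2+\mu_2\sigma_w^2)(\varphi-1)^{*}\gss{x}{\rm tr}{}+\sigma_w^2(\mu_2\sigma_x^2+\sigma_w^2)\g{f}$ and $\partial D/\partial\gss{f}{}{*}=\mu_4\varphi^{*}\gss{x}{\rm tr}{}+\sigma_w^2\mu_2\g{f}$, so the numerator of the gradient of $N/D$ is $D\,\partial N/\partial\gss{f}{}{*}-N\,\partial D/\partial\gss{f}{}{*}$.

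The crux is evaluating this at $\g{f}=\gss{f}{\rm MVU}{}=\gss{x}{\rm tr}{}/\|\gss{x}{\rm tr}{}\|^2$, where $\varphi=1$ and $\|\g{f}\|^2=1/\|\gss{x}{\rm tr}{}\|^2$. The factor $(\varphi-1)^{*}$ annihilates the first term of $\partial N/\partial\gss{f}{}{*}$, leaving $\partial N/\partial\gss{f}{}{*}=\sigma_w^2(\mu_2\sigma_x^2+\sigma_w^2)\gss{x}{\rm tr}{}/\|\gss{x}{\rm tr}{}\|^2$, which equals $N$ times $\gss{x}{\rm tr}{}$; likewise $\partial D/\partial\gss{f}{}{*}=\left(\mu_4+\sigma_w^2\mu_2/\|\gss{x}{\rm tr}{}\|^2\right)\gss{x}{\rm tr}{}$, which equals $D$ times $\gss{x}{\rm tr}{}$. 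Hence $D\,\partial N/\partial\gss{f}{}{*}-N\,\partial D/\partial\gss{f}{}{*}=(DN-ND)\gss{x}{\rm tr}{}=\g{0}$: at the MVU filter both gradients are collinear with $\gss{x}{\rm tr}{}$ with proportionality constants $N$ and $D$, so the gradient of the ratio vanishes automatically. (This is not a mere $|h|^2\mapsto E[|h|^2]$ relabelling of the deterministic case, since $N$ mixes $\mu_4$ and $\mu_2$ in a way that does not collapse to a single effective $|h|^2$; the computation must be redone, but the conclusion survives.)

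To promote this stationary point to a genuine minimizer, I would exploit the ratio structure directly. For fixed $\varphi$, both $N$ and $D$ are affine and increasing in $t:=\|\g{f}\|^2$, so $N/D$ is monotone in $t$ (the sign of $\partial(N/D)/\partial t$ is independent of $t$), and by Cauchy--Schwarz $t\ge|\varphi|^2/\|\gss{x}{\rm tr}{}\|^2$ with equality exactly when $\g{f}=\varphi^{*}\gss{f}{\rm MVU}{}$. If $N/D$ is nonincreasing in $t$ for the given $\varphi$, then $N/D\ge\lim_{t\to\infty}N/D=\sigma_x^2+\sigma_w^2/\mu_2$, and a short computation shows the right-hand side exceeds the value of the objective at $\gss{f}{\rm MVU}{}$; if $N/D$ is nondecreasing in $t$, it is enough to consider $\g{f}=\varphi^{*}\gss{f}{\rm MVU}{}$, which reduces matters to a scalar minimization over $\varphi\in\mathbb{C}$ that, after the bound $|\varphi-1|^2\ge(|\varphi|-1)^2$ and a one-variable derivative test, is solved by $\varphi=1$, i.e. by $\g{f}=\gss{f}{\rm MVU}{}$. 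In both regimes $\left[{\rm MSE}_{xe}^{rc}({\rm ZF})\right]_0$ is at least its value at $\gss{f}{\rm MVU}{}$, which is the claim.

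I expect the real obstacle to be this last global-optimality step rather than the (routine) gradient algebra: a vanishing gradient of a ratio of quadratic forms does not by itself rule out saddle behaviour, and the $\mu_4$-versus-$\mu_2$ asymmetry denies us the tidy factorizations available in the deterministic setting. If one is content, as with the companion propositions, to certify optimality only in the sense of a stationary point of the zeroth order metric, the proof ends after the gradient evaluation, and the simulation section supplies the empirical comparison against the true excess MSE.
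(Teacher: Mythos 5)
Your gradient evaluation is correct and, in its first two paragraphs, coincides with what the paper actually does: the paper's entire proof is the one-line observation that differentiating $\left[{\rm MSE}_{xe}^{rc}({\rm ZF})\right]_0$ with respect to $\g{f}$ and substituting $\g{f}=\gss{f}{\rm MVU}{}$ zeroes the gradient, which is exactly your computation $D\,\partial N/\partial\gss{f}{}{*}-N\,\partial D/\partial\gss{f}{}{*}=(DN-ND)\gss{x}{\rm tr}{}=\g{0}$ spelled out (and your observation that the $\mu_4$/$\mu_2$ mixing prevents a simple relabelling of the deterministic case is accurate --- the paper likewise recomputes the objective rather than substituting $E[|h|^2]$ for $|h|^2$). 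Where you go beyond the paper is the third paragraph: the paper is content with stationarity throughout (all of its ``optimality'' propositions are certified only by a vanishing gradient of the zeroth order metric, with simulations supplying the rest), whereas you add a genuine global argument --- monotonicity of $N/D$ in $t=\|\g{f}\|^2$ for fixed $\varphi$, the Cauchy--Schwarz reduction to $\g{f}=\varphi^{*}\gss{f}{\rm MVU}{}$, and the resulting scalar problem in $\varphi$ (which in fact collapses to minimizing a positive constant times $|\varphi-1|^2/|\varphi|^2$ plus a constant, so $\varphi=1$ is immediate without the $|\varphi-1|^2\ge(|\varphi|-1)^2$ detour). That addition is sound and strengthens the proposition from ``stationary point'' to ``global minimizer'' of the zeroth order excess MSE, something the paper does not attempt; your closing remark correctly identifies that, by the paper's own standard of optimality, the proof could stop after the gradient evaluation.
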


Via tedious calculations, we can show that the MMSE channel
estimator does not zero the gradient.

\emph{Remark:} This result is \emph{counterintuitive}: it says that
    when one has knowledge of the channel statistics but uses a ZF
    equalizer, one should ignore these statistics in choosing a
    channel estimator for minimizing the zeroth order excess MSE.

\section{Minimizing the Zeroth Order Probability of Error for the ML detector}
\label{sec:minPe}

It is straightforward to see that the decision rule given by
\pref{eq:MLdec} is equivalent to:
\begin{equation}
{\rm dec}\left[x(n)\right](h)=\arg\min_{\hat{x}(n)\in
\mathcal{X}}\left|\frac{y(n)}{h}-\hat{x}(n)\right|^2
\label{eq:MLdec1}
\end{equation}
With a given channel estimate, $h$ is replaced by $\hat{h}$ in the
last expression\footnote{Notice that this does not generalize to ISI and/or MIMO channels.}.

In the case of a perfectly known channel, the division
$y(n)/h=x(n)+w(n)/h$ results in an AWGN channel with information bearing signal power $\sigma_x^2$ and
noise variance $\sigma_w^2/|h|^2$. If only an estimate of the
channel, $\hat{h}=h+\epsilon$, is available, then the division
results in $y(n)/\hat{h}=x(n)+(w(n)-\epsilon x(n))/\hat{h}$. Here,
$\epsilon$ is the channel estimation error, which is Gaussian
distributed according to our assumptions with
$E[\epsilon]=h\left(\gss{f}{}{H}\gss{x}{\rm tr}{}-1\right)$ and
variance $\sigma_{\epsilon}^2=\sigma_w^2\|\g{f}\|^2$. Also,
$E[|\epsilon|^2]=|h|^2\left|\gss{f}{}{H}\gss{x}{\rm
tr}{}-1\right|^2+\sigma_w^2\|f\|^2$.

For the case of most common constellations and an AWGN
channel, the error probability is given by~\cite{pro95}:
\begin{equation}
P_{e}\approx aQ\left(b\sqrt{\rm SNR}\right) \label{eq:PeBPSK}
\end{equation}
where $Q(x)=(1/\sqrt{2\pi})\int_{x}^{+\infty}e^{-t^2/2}dt$ and $a,b$ are positive constants
depending on the geometry of the constellation. With a channel
estimation error, the useful signal power is again $\sigma_x^2$.
The noise variable is now $w(n)'=(w(n)-\epsilon x(n))/\hat{h}$ and
therefore $E[w(n)']=0$. For the power of the noise component, we
have:
\begin{equation}
E\left[|w(n)'|^2\right]=E\left[\left|\frac{w(n)}{\hat{h}}\right|^2\right]+E\left[\left|\frac{\epsilon}{\hat{h}}\right|^2\right]\sigma_x^2\label{eq:noisePbpsk}
\end{equation}
Here, we face again the infinite moment problem.  Using again similar arguments as
in the appendix for approximating $E[X/Y]$ by $E[X]/E[Y]$ when $Y=|\hat{h}|^2$, we
define the corresponding zeroth order version of
$E\left[|w(n)'|^2\right]$:
\begin{eqnarray}
\left\{%
\begin{array}{c}
  \left[E\left[\left|\frac{w(n)}{\hat{h}}\right|^2\right]\right]_0=\frac{\sigma_w^2}{E[|\hat{h}|^2]}=\frac{\sigma_w^2}{\left[|h|^2\left|\sbm{f}^{H}\sbms{x}{\rm tr}\right|^2+\sigma_w^2\left\|\sbm{f}\right\|^2\right]} \\
  \left[E\left[\left|\frac{\epsilon}{\hat{h}}\right|^2\right]\right]_0=\frac{E[|\epsilon|^2]}{E[|\hat{h}|^2]}=\frac{|h|^2\left|\sbm{f}^{H}\sbms{x}{\rm tr}-1\right|^2+\sigma_w^2\left\|\sbm{f}\right\|^2}{|h|^2\left|\sbm{f}^{H}\sbms{x}{\rm tr}\right|^2+\sigma_w^2\left\|\sbm{f}\right\|^2}\\
\end{array}%
\right.\label{eq:usefulRatios}
\end{eqnarray}
A variation of the error probability performance metric for any of
the commonly used linear modulation schemes, named \emph{zeroth order} error
probability, will be given by the following expression:
\begin{align}
\left[P_{e}\right]_0=&
aQ\left(b\sqrt{\frac{\left[|h|^2\left|\sbm{f}^{H}\sbms{x}{\rm
tr}\right|^2+\sigma_w^2\left\|\sbm{f}\right\|^2\right]}{\sigma_w^2/(\sigma_x^2)+\left[|h|^2\left|\sbm{f}^{H}\sbms{x}{\rm
tr}-1\right|^2+\sigma_w^2\left\|\sbm{f}\right\|^2\right]}}\right)\nonumber\\
=&aQ\left(b\sqrt{\frac{\sigma_x^2}{\left[{\rm MSE}_{x}^{dc}({\rm
ZF})\right]_0}}\right) \label{eq:PeBPSK1}
\end{align}
where we have used the \emph{zeroth order} SNR approximation given
by:
\[
\left[{\rm SNR}\right]_0=\frac{\sigma_x^2}{\left[{\rm
MSE}_{x}^{dc}({\rm ZF})\right]_0}.
\]

 Clearly, (\ref{eq:PeBPSK1}) is an artificial performance metric that appears in this paper for the
sake of our arguments. It is used as a
variation of the error probability to help us
extract useful conclusions. %Depending on the accuracy of the
%zeroth order MSE's, $\left[P_{e}\right]_0$ can be considered as an
%approximation to the true error probability based on the analysis
%in Section~\ref{sec:prelim}.

Since $Q(x)$ is a strictly decreasing function, the zeroth order
probability of error for a given channel $h$ is minimized when
$\left[{\rm MSE}_{x}^{dc}({\rm ZF})\right]_0$ is minimized.
Therefore, the results of Subsection \ref{subsec:dMSEdcZF} apply:
\begin{prop}
The MVU estimator is \emph{not} an optimal channel estimator for
the task of minimizing $\left[P_{e}\right]_0$ given the true
channel, when using any of the well-known digital modulations in a
flat-fading AWGN channel.\label{thrm:6}
\end{prop}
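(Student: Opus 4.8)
The plan is to reduce the claim to Proposition~\ref{thrm:1} by a strict-monotonicity argument on the map from $\left[{\rm MSE}_x^{dc}({\rm ZF})\right]_0$ to $\left[P_e\right]_0$. First I would recall from \pref{eq:PeBPSK1} that, for a fixed true channel $h$ and a fixed training vector $\gss{x}{\rm tr}{}$, the zeroth order error probability depends on the channel estimating filter $\g{f}$ only through the scalar $M(\g{f}):=\left[{\rm MSE}_x^{dc}({\rm ZF})\right]_0$, namely $\left[P_e\right]_0 = aQ\bigl(b\sqrt{\sigma_x^2/M(\g{f})}\bigr)$. Since $M(\g{f})>0$ and finite for every $\g{f}\neq\g{0}$ (the numerator and denominator of the zeroth order MSE in Subsection~\ref{subsec:dMSEdcZF-1} are strictly positive because $\sigma_w^2>0$), this composition is well defined over the admissible filters.

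Next I would establish that $\phi(M):=aQ\bigl(b\sqrt{\sigma_x^2/M}\bigr)$ is strictly increasing on $(0,\infty)$: as $M$ increases, $\sigma_x^2/M$ decreases, hence $b\sqrt{\sigma_x^2/M}$ decreases, and since $Q(\cdot)$ is strictly decreasing, $Q\bigl(b\sqrt{\sigma_x^2/M}\bigr)$ increases; multiplication by $a>0$ preserves this. Consequently a filter $\g{f}$ minimizes $\left[P_e\right]_0$ if and only if it minimizes $M(\g{f})=\left[{\rm MSE}_x^{dc}({\rm ZF})\right]_0$, and the two optimization problems have exactly the same set of minimizers. The constants $a$ and $b$ enter only as positive scalars, so this equivalence holds regardless of the particular constellation, which is precisely what allows the conclusion to cover ``any of the well-known digital modulations in a flat-fading AWGN channel''.

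Finally I would invoke Proposition~\ref{thrm:1}: the MVU filter $\gss{f}{\rm MVU}{}$ is not a minimizer of $\left[{\rm MSE}_x^{dc}({\rm ZF})\right]_0$. Indeed \pref{eq:nomMVUE} shows that the gradient does not vanish at $\g{f}=\gss{f}{\rm MVU}{}$, so a small step along the steepest descent direction yields a filter with strictly smaller $M$; equivalently, the explicit filter $\gss{f}{\rm opt}{dc{\rm ZF}}$ of \pref{eq:optfZF}, which satisfies the necessary condition \pref{eq:ncZF}, attains a strictly smaller value. By the strict monotonicity of $\phi$, that same filter attains a strictly smaller $\left[P_e\right]_0$ than $\gss{f}{\rm MVU}{}$, so the MVU estimator cannot be optimal for $\left[P_e\right]_0$ either, which is the assertion.

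There is essentially no hard step here; the entire argument is the observation that $\left[P_e\right]_0$ is a strictly monotone transform of $\left[{\rm MSE}_x^{dc}({\rm ZF})\right]_0$. The only point that deserves a word of care is that the statement is pointwise in the unknown $h$ (hence the qualifier ``given the true channel''), so the improving filter inherits the $h$-dependence of \pref{eq:optfZF}; removing that dependence would again require the noninformative-prior device used at the end of Subsection~\ref{subsec:dMSEdcZF-1}, but that lies outside the present statement.
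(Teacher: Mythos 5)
Your proposal is correct and follows essentially the same route as the paper: the text preceding Proposition~\ref{thrm:6} argues exactly that, since $Q(\cdot)$ is strictly decreasing, minimizing $\left[P_e\right]_0$ for a given $h$ is equivalent to minimizing $\left[{\rm MSE}_x^{dc}({\rm ZF})\right]_0$, so the non-optimality of the MVU filter established in Subsection~\ref{subsec:dMSEdcZF} carries over. Your write-up merely makes the monotonicity and descent-direction details explicit, which the paper leaves implicit.
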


Suppose now that we average $\left[P_{e}\right]_0$ with respect to
\emph{any} given channel distribution. We can then make the
following statement:
\begin{prop}
The MVU estimator is \emph{not} an optimal channel estimator for
the task of minimizing the average $\left[P_{e}\right]_0$, when
using any of the well-known digital modulations in a flat-fading
AWGN channel.\label{thrm:7}
\end{prop}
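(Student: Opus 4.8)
The plan is to establish Proposition~\ref{thrm:7} by a stationarity argument: I will show that $\gss{f}{\rm MVU}{}$ is not even a critical point of the averaged objective $E_h\!\left[[P_e]_0\right]$, hence it cannot minimize it. The reason the proof of Proposition~\ref{thrm:6} does not carry over directly is that averaging destroys the reduction exploited there: because $Q(\cdot)$ and $\sqrt{\cdot}$ are nonlinear, minimizing $E_h\!\left[[P_e]_0\right]$ is \emph{not} equivalent to minimizing $E_h\!\left[[{\rm MSE}_x^{dc}({\rm ZF})]_0\right]$, so I must differentiate the probability-of-error functional itself.

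First I would use \pref{eq:PeBPSK1} to write, for a fixed $h$, $[P_e]_0=a\,Q\!\big(b\sqrt{\sigma_x^2/[{\rm MSE}_x^{dc}({\rm ZF})]_0}\big)$, which is a smooth function of $\g{f}$ since the denominator of $[{\rm MSE}_x^{dc}({\rm ZF})]_0$ is strictly positive for $\g{f}\neq\g{0}$ and hence the radicand is strictly positive. The chain rule then gives
\begin{equation}
\nabla_{\sbm{f}}\,[P_e]_0=\rho(\g{f},h)\;\nabla_{\sbm{f}}\,[{\rm MSE}_x^{dc}({\rm ZF})]_0,
\end{equation}
where $\rho(\g{f},h)>0$ for every $\g{f}\neq\g{0}$ and every $h$: it is the product of $-a\,b\,Q'(\cdot)>0$ with the strictly positive Jacobian factors coming from $\sqrt{\cdot}$ and from the map $[{\rm MSE}_x^{dc}({\rm ZF})]_0\mapsto\sigma_x^2/[{\rm MSE}_x^{dc}({\rm ZF})]_0$. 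Next I would interchange expectation and gradient, $\nabla_{\sbm{f}}E_h\!\left[[P_e]_0\right]=E_h\!\left[\rho(\g{f},h)\,\nabla_{\sbm{f}}[{\rm MSE}_x^{dc}({\rm ZF})]_0\right]$, and evaluate at $\g{f}=\gss{f}{\rm MVU}{}$. Restoring the strictly positive prefactor that was discarded in \pref{eq:ddcZFnom}, the computation behind \pref{eq:nomMVUE} shows that for \emph{every} value of $h$ one has $\nabla_{\sbm{f}}[{\rm MSE}_x^{dc}({\rm ZF})]_0|_{\sbms{f}{\rm MVU}}=-\,c(h)\,\gss{x}{\rm tr}{}$ with $c(h)>0$. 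The decisive structural fact is that this inner gradient points along the \emph{fixed} direction $-\gss{x}{\rm tr}{}$, with only its positive magnitude depending on $h$; therefore $\nabla_{\sbm{f}}E_h\!\left[[P_e]_0\right]|_{\sbms{f}{\rm MVU}}=-\big(E_h[\rho(\gss{f}{\rm MVU}{},h)\,c(h)]\big)\,\gss{x}{\rm tr}{}$, a strictly negative multiple of $\gss{x}{\rm tr}{}\neq\g{0}$, hence nonzero. This proves $\gss{f}{\rm MVU}{}$ is not optimal, and in fact scaling it up along $\gss{x}{\rm tr}{}$ — exactly the correction in \pref{eq:optfZF} — strictly decreases the average, for \emph{any} channel law.

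The step I expect to be the main obstacle is justifying the interchange of $E_h[\cdot]$ and $\nabla_{\sbm{f}}$, i.e. producing a locally (in $\g{f}$) integrable dominating function for $\nabla_{\sbm{f}}[P_e]_0$; this is precisely where the recurring ``infinite moment'' pathologies could bite, and one would impose the same mild moment conditions on the channel distribution as elsewhere in the paper. A cleaner way to sidestep the vector calculus is to restrict attention to the one-parameter family $\g{f}=u\,\gss{f}{\rm MVU}{}$, $u\in\mathbb{R}$: along this ray $[{\rm MSE}_x^{dc}({\rm ZF})]_0$ is, for each $h$, minimized at $u^{\star}(h)=1+\sigma_w^2/(\sigma_x^2|h|^2)>1$ (which is precisely \pref{eq:optfZF}), so its $u$-derivative at $u=1$ is strictly negative for every $h$; composing with the strictly decreasing $Q$ and averaging then gives $\tfrac{d}{du}E_h\!\left[[P_e]_0(u\,\gss{f}{\rm MVU}{})\right]|_{u=1}<0$ — again with no cancellation, because the sign is the same for all $h$ — and the conclusion follows. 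Either route establishes Proposition~\ref{thrm:7}.
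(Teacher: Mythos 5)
Your proposal is correct and follows essentially the same route as the paper's own proof: both apply the chain rule to $Q\bigl(b\sqrt{[{\rm SNR}]_0}\bigr)$, interchange $E_h$ with the gradient, and observe that at $\gss{f}{\rm MVU}{}$ the inner gradient (the negative of \pref{eq:nomMVUE} up to positive scalars) is for every $h$ a fixed-sign multiple of the single nonzero direction $\gss{x}{\rm tr}{}$, so no cancellation can occur under averaging and the gradient of $\overline{\left[P_{e}\right]}_0$ is nonzero. Your restriction to the ray $\g{f}=u\,\gss{f}{\rm MVU}{}$ is a tidy scalar repackaging of the same sign argument rather than a genuinely different proof.
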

\begin{proof}
Assume that the pdf of the fading coefficient magnitude is
$p(|h|)$ and $|h|\in [\alpha,\beta],\alpha,\beta\geq 0$, $\beta$
possibly equal to $+\infty$. The average $\left[P_{e}\right]_0$ is
given by the expression:
\begin{equation}
\overline{\left[P_{e}\right]}_0= \int_{\alpha}^{\beta}
aQ\left(b\sqrt{\left[{\rm SNR}\right]_0}\right)p(|h|)d|h|
\end{equation}
Assuming that the differentiation and integral operators can be
interchanged, we can set the gradient of the above expression with
respect to $\gss{f}{}{H}$ to zero to get the equation:
\begin{eqnarray*}
&&\nabla_{\sbm{f}^{H}}\overline{\left[P_{e}\right]}_0=
\int_{\alpha}^{\beta}
a \nabla_{\sbm{f}^{H}}Q\left(b\sqrt{\left[{\rm SNR}\right]_0}\right)p(|h|)d|h|=\gss{0}{}{H}\\
&&{\rm or} \\
&& \int_{\alpha}^{\beta} a\nabla_{x}Q\left(x\right)\left|_{x=b\sqrt{\left[{\rm
SNR}\right]_0}}\right.\frac{b\nabla_{\sbm{f}^{H}}{\left[{\rm
SNR}\right]_0}}{2b\sqrt{\left[{\rm
SNR}\right]_0}}p(|h|)d|h|\\
&&=\gss{0}{}{H}
\end{eqnarray*}
where in the second equation we have used the chain rule of
differentiation. $Q(x)$ is strictly decreasing in $x$, thus
\[
a\nabla_{x}Q\left(x\right)\left|_{x=b\sqrt{\left[{\rm
SNR}\right]_0}}\right.<0
\]
for any value of $h$. Also $p(|h|)\geq 0$ for every value of $h$
since it is a distribution function. Additionally, ${\left[{\rm
SNR}\right]_0}\geq 0$ for every value of $h$. Finally, the
numerator of $\nabla_{\sbm{f}^{H}}{\left[{\rm SNR}\right]_0}$ for
the MVU estimator is given by \pref{eq:nomMVUE} multiplied by $-1$
and by a positive scalar\footnote{The denominator is always
positive as a squared term.}. Therefore, it is either positive or
negative with respect to $h$ in a componentwise fashion depending
on the sign of the corresponding element in $\gss{x}{\rm
tr}{}$\footnote{Some of the entries of $\gss{x}{\rm tr}{}$ may be
zero but not all of them simultaneously.}. These arguments verify
that $\nabla_{\sbm{f}^{H}}\overline{\left[P_{e}\right]}_0\neq
\g{0}$. This concludes the proof.
\end{proof}

If we assume that the prior distribution of $h$ is known, then
instead of the MVU, one could use the MMSE channel estimator.
Plugging $\gss{f}{\rm MMSE}{}$ into the negative of
\pref{eq:ddcZFnom}, one can obtain that $
\left.\nabla_{\sbm{f}^{H}}{\left[{\rm
SNR}\right]_0}\right|_{\sbm{f}=\sbms{f}{\rm MMSE}}\neq
\g{0}\label{eq:SNRnomMMSE}$.

Since in the case of the MMSE estimator, the assumption is that we
always know the prior channel fading distribution, we can make the
following statement:
\begin{prop}
The MMSE estimator is \emph{not} optimal for the task of
minimizing $\overline{\left[P_{e}\right]}_0$, when using any of
the well-known digital modulations in a flat-fading AWGN channel.
\label{thrm:8}
\end{prop}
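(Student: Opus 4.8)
The plan is to reproduce the argument in the proof of Proposition~\ref{thrm:7} with $\gss{f}{\rm MVU}{}$ replaced by $\gss{f}{\rm MMSE}{}$, exploiting the fact that, by \pref{eq:MMSE}, $\gss{f}{\rm MMSE}{}$ is a \emph{real, positive} scalar multiple of $\gss{x}{\rm tr}{}$ --- exactly like $\gss{f}{\rm MVU}{}$. First I would write the averaged metric as $\overline{\left[P_{e}\right]}_0=\int_{\alpha}^{\beta} aQ\!\left(b\sqrt{[{\rm SNR}]_0}\right)p(|h|)\,d|h|$, assume (as in Proposition~\ref{thrm:7}) that differentiation and integration may be interchanged, and apply the chain rule to obtain
\[
\nabla_{\sbm{f}^{H}}\overline{\left[P_{e}\right]}_0=\int_{\alpha}^{\beta} a\,\nabla_{x}Q(x)\big|_{x=b\sqrt{[{\rm SNR}]_0}}\,\frac{b}{2\sqrt{[{\rm SNR}]_0}}\,\nabla_{\sbm{f}^{H}}[{\rm SNR}]_0\;p(|h|)\,d|h|.
\]
Since $Q$ is strictly decreasing, $a,b>0$, $[{\rm SNR}]_0>0$ and $p(|h|)\ge 0$, every factor of the integrand except $\nabla_{\sbm{f}^{H}}[{\rm SNR}]_0$ has a sign that is independent of $h$; so it remains to understand $\nabla_{\sbm{f}^{H}}[{\rm SNR}]_0$ evaluated at $\gss{f}{\rm MMSE}{}$.

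Second, I would compute that gradient at $\g{f}=\gss{f}{\rm MMSE}{}$. Because $[{\rm SNR}]_0=\sigma_x^2/[{\rm MSE}_{x}^{dc}({\rm ZF})]_0$, and the channel appearing inside $[{\rm MSE}_{x}^{dc}({\rm ZF})]_0$ is the true (deterministic) $h$, this gradient equals $-1$ times a strictly positive, $h$-dependent scalar times the numerator \pref{eq:ddcZFnom} evaluated at $\gss{f}{\rm MMSE}{}$. The key observation is that since $\gss{f}{\rm MMSE}{}\parallel\gss{x}{\rm tr}{}$ with a real constant, $\varphi=\gss{f}{\rm MMSE}{H}\gss{x}{\rm tr}{}$ is real and \emph{both} bracketed vectors in \pref{eq:ddcZFnom} are real multiples of $\gss{x}{\rm tr}{}$; hence \pref{eq:ddcZFnom} collapses to $g(|h|^2)\,\gss{x}{\rm tr}{}$ for a real scalar $g$. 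Substituting $\gss{f}{\rm MMSE}{}$ from \pref{eq:MMSE} and simplifying, a chain of cancellations leaves $g(|h|^2)$ in the form $-\sigma_w^2$ times manifestly strictly positive factors --- i.e.\ a strictly negative quantity of the same sign structure as the MVU residual \pref{eq:nomMVUE} --- for every admissible $|h|$. This is the step I expect to be the main obstacle: one has to carry out the cancellation carefully enough to be certain that $g(|h|^2)$ keeps a constant (negative) sign over the whole support $[\alpha,\beta]$, since this is precisely what prevents cancellation under the integral. It also re-establishes, pointwise, the fact \pref{eq:SNRnomMMSE}, now in the sharper form $\nabla_{\sbm{f}^{H}}[{\rm SNR}]_0\big|_{\sbm{f}=\sbms{f}{\rm MMSE}}=\rho(h)\,\gss{x}{\rm tr}{}$ with $\rho(h)>0$.

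Finally, I would substitute $\nabla_{\sbm{f}^{H}}[{\rm SNR}]_0\big|_{\sbm{f}=\sbms{f}{\rm MMSE}}=\rho(h)\gss{x}{\rm tr}{}$ into the integral, which factors $\gss{x}{\rm tr}{}$ out:
\[
\nabla_{\sbm{f}^{H}}\overline{\left[P_{e}\right]}_0\Big|_{\sbm{f}=\sbms{f}{\rm MMSE}}=\left(\int_{\alpha}^{\beta} a\,\nabla_{x}Q(x)\big|_{x=b\sqrt{[{\rm SNR}]_0}}\frac{b}{2\sqrt{[{\rm SNR}]_0}}\,\rho(h)\,p(|h|)\,d|h|\right)\gss{x}{\rm tr}{}.
\]
The scalar integrand is $\le 0$ throughout and strictly negative wherever $p(|h|)>0$, so the bracketed number is strictly negative; as $\gss{x}{\rm tr}{}\neq\g{0}$, the gradient does not vanish, so $\gss{f}{\rm MMSE}{}$ violates the first-order necessary condition for minimizing $\overline{\left[P_{e}\right]}_0$. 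Since only $a,b>0$ and the strict monotonicity of $Q$ were used, this argument applies to all the commonly used linear modulations to which \pref{eq:PeBPSK1} applies, which completes the proof.
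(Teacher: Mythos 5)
Your proposal is correct and follows essentially the same route as the paper, which simply states that the result ``follows along the same lines as in Proposition~\ref{thrm:7}'' after noting that $\gss{f}{\rm MMSE}{}$ does not zero \pref{eq:ddcZFnom}. In fact your write-up is more complete than the paper's: since $\gss{f}{\rm MMSE}{}=c\,\gss{x}{\rm tr}{}$ with $c>0$ gives $\varphi=cE<1$, the bracket \pref{eq:ddcZFnom} collapses to $\left[|h|^2\varphi+\sigma_w^2c\right]\left[|h|^2(\varphi-1)-\sigma_w^2/\sigma_x^2\right]\gss{x}{\rm tr}{}$, whose scalar factor is strictly negative for every $|h|$, which is exactly the constant-sign property needed to rule out cancellation under the integral and which the paper leaves implicit.
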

\begin{proof}
The result follows along the same lines as in Proposition \ref{thrm:7}.
\end{proof}

The problems of determining the optimal channel estimator for the
task of minimizing $\left[P_{e}\right]_0$ for a given channel $h$
and $\overline{\left[P_{e}\right]}_0$ was already solved in
Subsections \ref{subsec:dMSEdcZF-1} and \ref{subsec:dMSErcZF},
respectively. In the case of $\overline{\left[P_{e}\right]}_0$, we
can only assess their optimality analytically, using the following
argument: We use the upper bound\footnote{The usual Chernoff bound
can also be used.} $Q(x)<(1/x) (1/\sqrt{2\pi})e^{-x^2/2}, x>0$,
which becomes tight as $x$ increases~\cite{pro95}. In our case,
$x=b\sqrt{\left[{\rm SNR}\right]_0}$, and we have already assumed
high SNR, therefore high $\left[{\rm SNR}\right]_0$, to justify
the use of the ZF equalizer. Using this bound and the relationship
between $\left[{\rm SNR}\right]_0$ and $\left[{\rm
MSE}_x^{dc}({\rm ZF})\right]_0$, we get:
\begin{align}
Q(b\sqrt{\left[{\rm SNR}\right]_0})&<\frac{\sqrt{\left[{\rm
MSE}_x^{dc}({\rm
ZF})\right]_0}}{b\sigma_x}\frac{1}{\sqrt{2\pi}}e^{-\frac{b^2\sigma_x^2}{2\left[{\rm
MSE}_x^{dc}({\rm ZF})\right]_0}}\nonumber\\&<\frac{\sqrt{\left[{\rm
MSE}_x^{dc}({\rm
ZF})\right]_0}}{b\sigma_x}\frac{1}{\sqrt{2\pi}}
\end{align}
where the last inequality holds for large SNR and therefore small $\left[{\rm MSE}_x^{dc}({\rm ZF})\right]_0$.
The right hand side function is concave with respect to
$\left[{\rm MSE}_x^{dc}({\rm ZF})\right]_0>0$, therefore averaging
over any channel distribution, we get:
\begin{eqnarray*}
E_{h}\left[Q\left(b\sqrt{\left[{\rm
SNR}\right]_0}\right)\right]&<&\frac{\sqrt{E_{h}\left[\left[{\rm
MSE}_x^{dc}({\rm ZF})\right]_0\right]}}{b\sigma_x\sqrt{2\pi}}
\end{eqnarray*}
We can use one more time the zeroth order approximation
to approximate $E_{h}\left[\left[{\rm
MSE}_x^{dc}({\rm ZF})\right]_0\right]$.  The right hand side is
minimized when this last zeroth order approximation is
minimized. Thus, the estimators derived in Subsections \ref{subsec:dMSEdcZF-1} and
\ref{subsec:dMSErcZF} are optimal for the task of minimizing
$\overline{\left[P_{e}\right]}_0$, in the sense that they minimize
an upper bound to $E_{h}\left[Q\left(b\sqrt{\left[{\rm
SNR}\right]_0}\right)\right]$.

\emph{Remark}: Although, we have shown that the MVU and MMSE
estimators are not optimal for the task of minimizing the zeroth
order probability of error, we will see in the simulation section
that their \emph{actual} probability of error performance is
almost identical with that of the optimal estimators for the
zeroth order probability of error. This is due to two facts: first,
the zeroth order probability of error is a
variation of the actual probability of error and
second, in practice the difference in the channel estimates must
be large enough to give rise to a notable difference in the probability of error.
Nevertheless, we conjecture that such a difference may be more
clear in the case of multiple input multiple output (MIMO)
systems if tight approximations of the error probability functions are used to derive the
corresponding channel estimators.

\section{Simulations}
\label{sec:sims}

\begin{figure}
\begin{center}
\begin{tabular}{c}
  \includegraphics[scale=0.5]{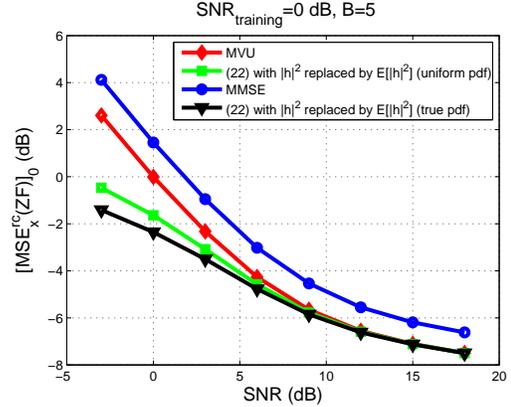}
\end{tabular}
\end{center}
\caption{$\left[{\rm MSE}_x^{rc}(\rm ZF)\right]_0$ with SNR during
training equal to $0$~dB and $B=5$. }
  \label{fig:ZFMSE_zeroth}
\end{figure}

\begin{figure}
\begin{center}
\begin{tabular}{c}
  \includegraphics[scale=0.5]{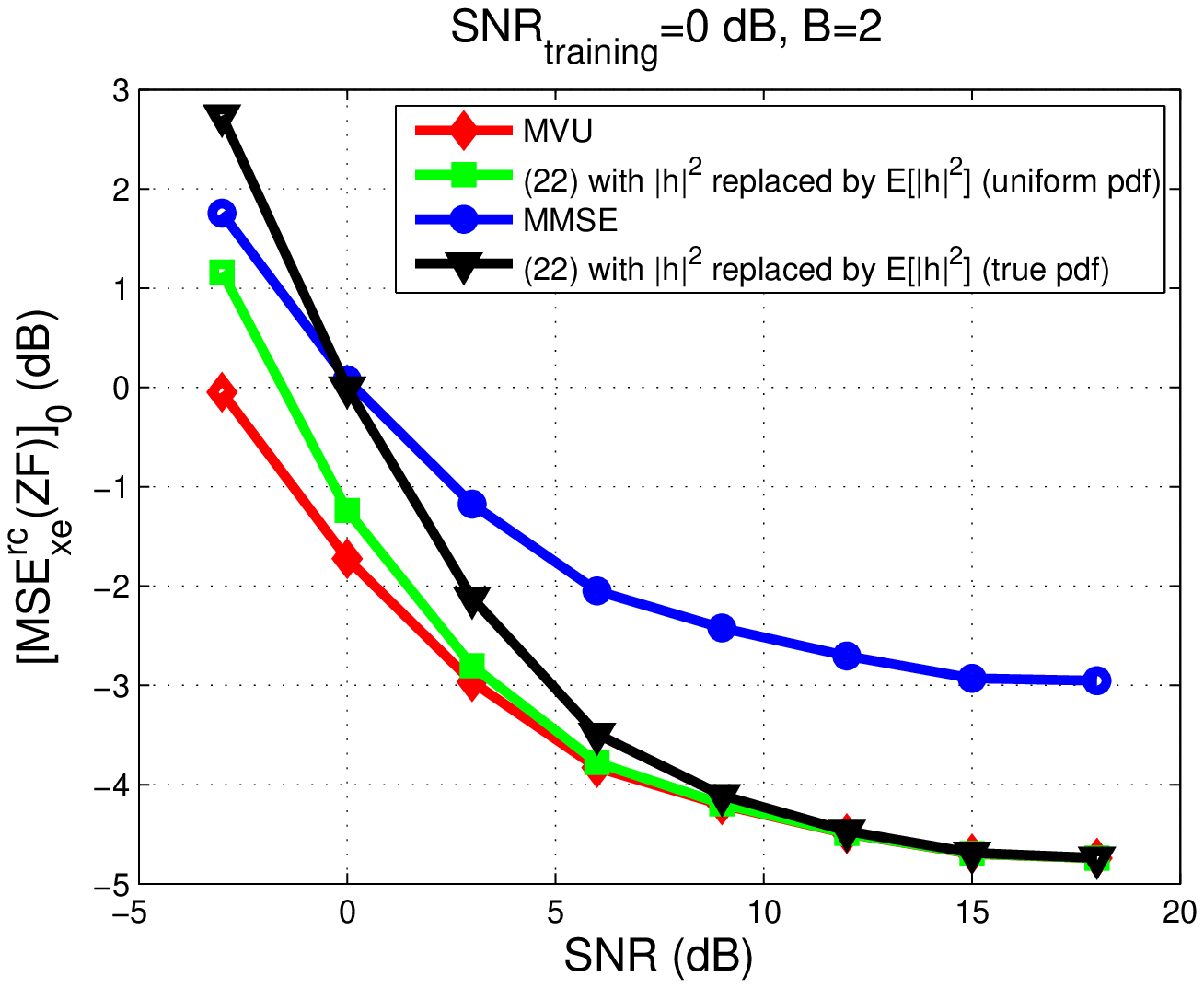}
\end{tabular}
\end{center}
\caption{$\left[{\rm MSE}_{xe}^{rc}(\rm ZF)\right]_0$ with SNR
during training equal to $0$~dB and $B=2$.}
  \label{fig:ZFexMSE_zeroth}
\end{figure}

\begin{figure}
\begin{center}
\begin{tabular}{c}
  \includegraphics[scale=0.5]{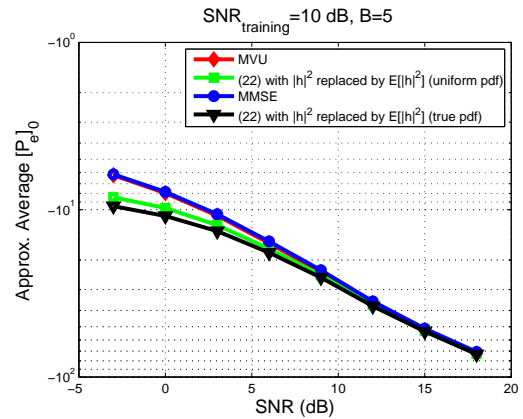}
\end{tabular}
\end{center}
\caption{Average $Q\left(\sqrt{\sigma_x^2/\left[{\rm
MSE}_{x}^{rc}(\rm ZF)\right]_0}\right)$ with SNR during training
equal to $10$~dB and $B=5$.}
  \label{fig:ZFpe_zeroth}
\end{figure}

\begin{figure}
\begin{center}
\begin{tabular}{c}
  \includegraphics[scale=0.5]{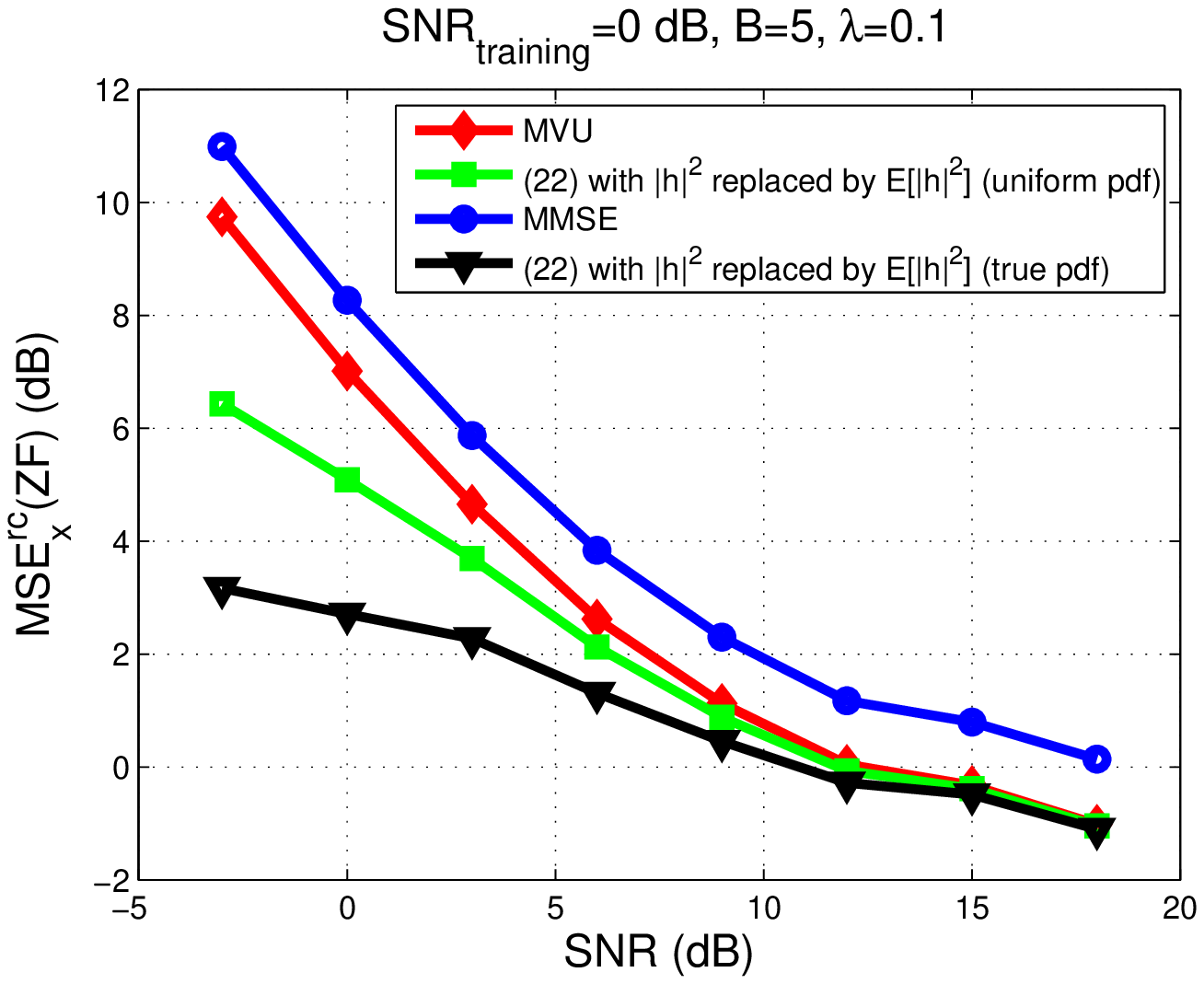}
\end{tabular}
\end{center}
\caption{${\rm MSE}_{x}^{rc}(\rm ZF)$ with SNR during training
equal to $0$~dB, $B=5$ and $\lambda=0.1$. Moreover, $E_h^{ud}[|h|^2]=3$.}
  \label{fig:ZFMSE}
\end{figure}

\begin{figure}
\begin{center}
\begin{tabular}{c}
  \includegraphics[scale=0.5]{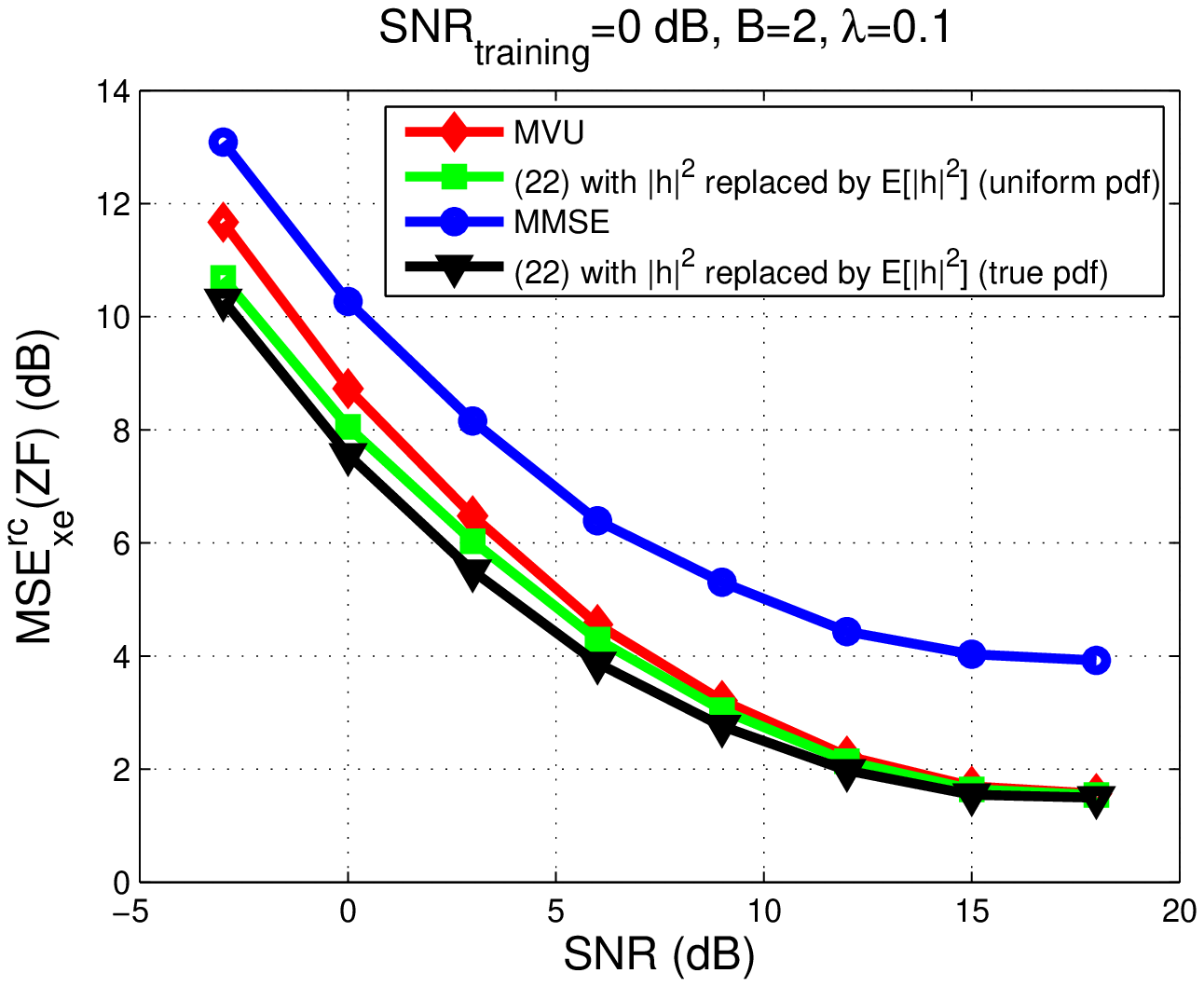}
\end{tabular}
\end{center}
\caption{${\rm MSE}_{xe}^{rc}(\rm ZF)$ with SNR during training
equal to $0$~dB, $B=2$ and $\lambda=0.1$. Moreover, $E_h^{ud}[|h|^2]=3$.}
  \label{fig:ZFexMSE}
\end{figure}

\begin{figure}
\begin{center}
\begin{tabular}{c}
  \includegraphics[scale=0.5]{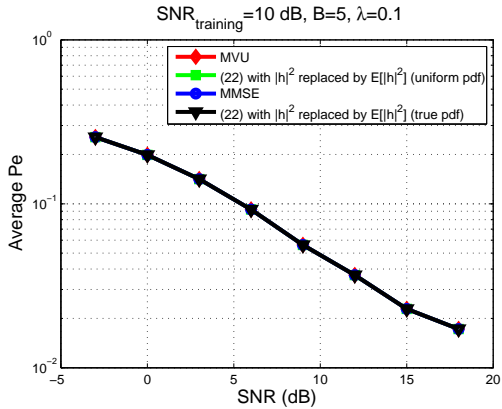}
\end{tabular}
\end{center}
\caption{Average ${P}_e$ with SNR during training equal to
$10$~dB, $B=5$ and $\lambda=0.1$. Moreover, $E_h^{ud}[|h|^2]=3$.}
  \label{fig:ZFpe}
\end{figure}

\begin{figure}
\begin{center}
\begin{tabular}{c}
  \includegraphics[scale=0.5]{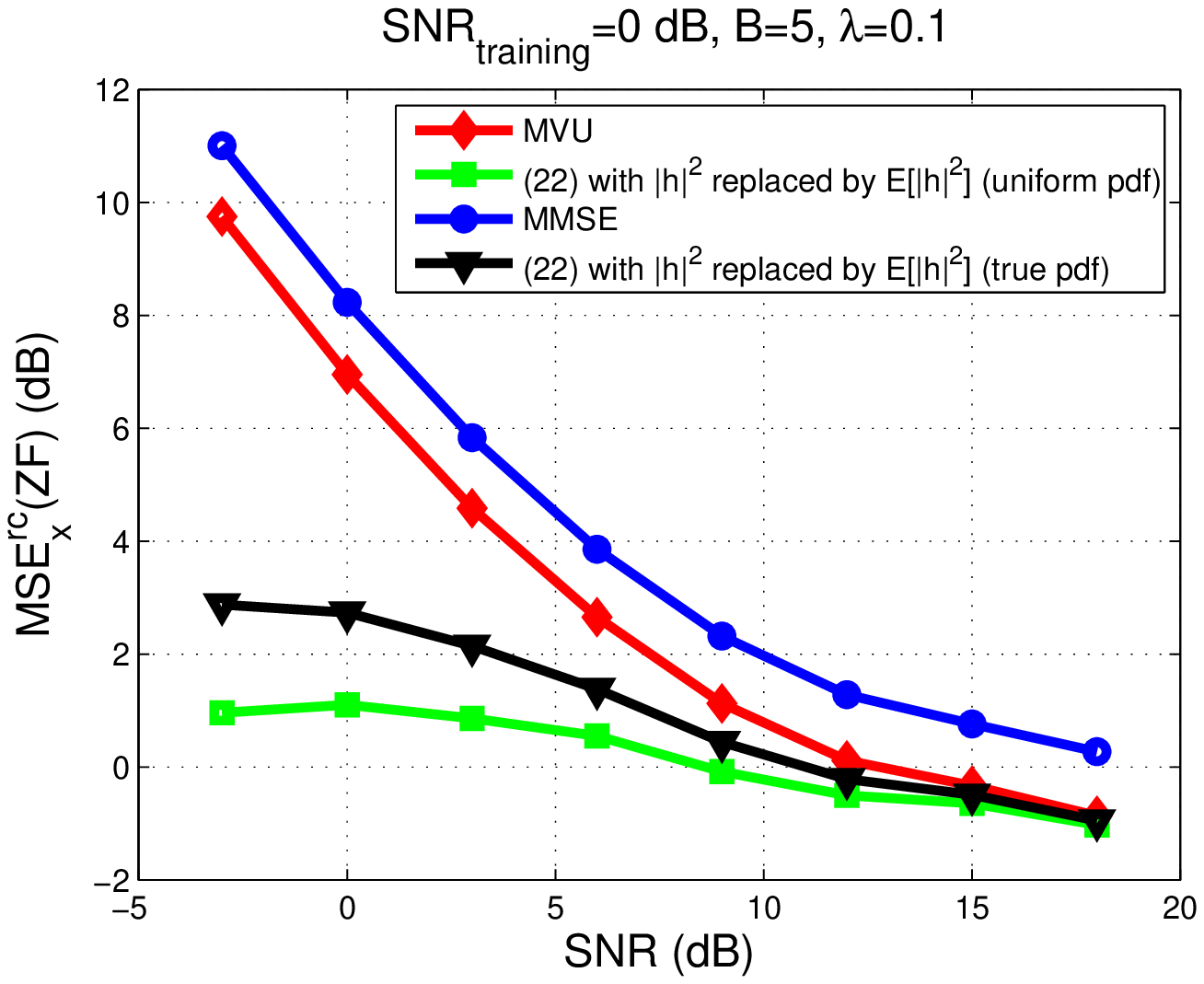}
\end{tabular}
\end{center}
\caption{${\rm MSE}_{x}^{rc}(\rm ZF)$ with SNR during training
equal to $0$~dB, $B=5$ and $\lambda=0.1$. Moreover, $E_h^{ud}[|h|^2]=1/2$.}
  \label{fig:ZFMSE_biased}
\end{figure}

\begin{figure}
\begin{center}
\begin{tabular}{c}
  \includegraphics[scale=0.5]{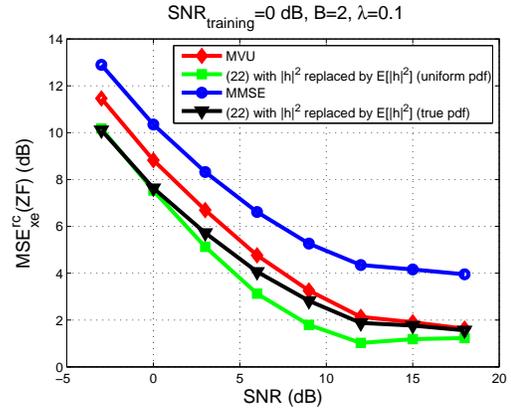}
\end{tabular}
\end{center}
\caption{${\rm MSE}_{xe}^{rc}(\rm ZF)$ with SNR during training
equal to $0$~dB, $B=2$ and $\lambda=0.1$. Moreover, $E_h^{ud}[|h|^2]=1/6$.}
  \label{fig:ZFexMSE_biased}
\end{figure}

In this section we present numerical results to verify our
analysis. In all figures, $h\sim\mathcal{CN}(0,1)$ and QPSK
modulation is assumed. The SNR during training highlights how good
the channel estimate is. The parameter $\lambda$ has
been empirically selected to be $0.1$. All schemes in
Figs.~\ref{fig:ZFMSE}-\ref{fig:ZFexMSE_biased} use
\pref{eq:WellBehEst} for the same $\lambda$. In Figs.~\ref{fig:ZFMSE}-\ref{fig:ZFpe},
$E_h^{ud}[|h|^2]$ is chosen to be $3E[|h|^2]=3$, i.e., the real and imaginary parts of
$h$ are assumed i.i.d. following a uniform distribution in $[-3/\sqrt{2},3/\sqrt{2}]$. In Figs. \ref{fig:ZFMSE_biased} and
\ref{fig:ZFexMSE_biased}, $E_h^{ud}[|h|^2]$ equals $1/2$ and $1/6$, respectively.

In Fig.~\ref{fig:ZFMSE_zeroth}, $\left[{\rm MSE}_x^{rc}(\rm
ZF)\right]_0$ is presented for $B=5$ and SNR during training equal
to $0$~dB. The derived optimal estimators in this paper are better
than the MVU and MMSE estimators. Additionally, the MVU estimator
appears to be better than the MMSE estimator for this performance
metric. This is a new observation contradicting what one would
expect and verifying the motivation of this paper.

Fig.~\ref{fig:ZFexMSE_zeroth} presents the corresponding results
for $\left[{\rm MSE}_{xe}^{rc}(\rm ZF)\right]_0$. The MVU is the
best estimator as proved. This is another example contradicting
what one would expect and verifying the motivation of this paper.

Furthermore, Fig.~\ref{fig:ZFpe_zeroth} shows the performance of
all schemes in the case of an approximation to the error
probability equal to $Q\left(\sqrt{\sigma_x^2/\left[{\rm
MSE}_{x}^{rc}(\rm ZF)\right]_0}\right)$. Here, we have assumed
that the constants $a,b$ are equal to $1$, since their specific
values are irrelevant to the purpose of this simulation plot. The
derived estimators in this paper are better than the MVU and MMSE
estimators as proved in the previous section. The difference of
the curves is present in the low SNR regime.

We now examine the performance of the derived estimators in this
paper for the true performance metrics. All the estimators are
implemented based on $\pref{eq:WellBehEst}$ to combat the infinite
moment problems.

In Fig.~\ref{fig:ZFMSE}, ${\rm MSE}_x^{rc}(\rm ZF)$ is presented
for $B=5$ and SNR during training equal to $0$~dB. The derived
optimal\footnote{The term ``optimal'' is used in this case to refer to uniformly better
estimators than the MVU/MMSE estimators and not to actually optimal estimators in the strict sense. The estimators are optimal only with respect to the zeroth order metrics.} estimators in this paper are better than the MVU and MMSE
estimators. We can see that the zeroth order approximations in
this case are satisfactory even for a low SNR during training, in
the sense that the corresponding optimal estimators outperform the
MVU and MMSE estimators for the true performance metric.
Additionally, the MVU estimator appears to be better than the MMSE
estimator for this performance metric. This is yet a new snapshot
contradicting what one would naturally expect.

Fig.~\ref{fig:ZFexMSE} presents the corresponding results for
${\rm MSE}_{xe}^{rc}(\rm ZF)$. The MVU is better than the MMSE
estimator, coinciding with the analysis based on the zeroth order
approximation. Note however that the other two estimators appear
to be better than the MVU. To obtain a well-behaved ${\rm
MSE}_{xe}^{rc}(\rm ZF)$ in this case, regularization of the same form
as in $\pref{eq:WellBehEst}$ is applied to $h$ to avoid values
around zero. In this sense, Fig.~\ref{fig:ZFexMSE} serves more as
a proof that the application-oriented estimator selection is valid
and less as an actual scenario present in the real world.

Furthermore, Fig.~\ref{fig:ZFpe} shows the performance of all schemes
in the case of the error probability performance metric. Monte Carlo
simulations have been used to compute the actual error probability. All
schemes coincide because the differences in the channel estimates
are not so large to appear in the error probability. Nevertheless,
these differences may clearly appear in a MIMO scenario if tight approximations of the error probability function are used to derive the
corresponding channel estimators.

Finally,  Figs.~\ref{fig:ZFMSE_biased} and \ref{fig:ZFexMSE_biased} demonstrate the
validity of the Remark $3$ in the end of subsection \ref{subsec:dMSEdcZF}. These plots correspond to
Figs.~\ref{fig:ZFMSE} and \ref{fig:ZFexMSE} but with $E_h^{ud}[|h|^2]=1/2$ and $1/6$, respectively. They verify that the zeroth order metrics
used in this paper are good approximations in terms of indicating the structure of uniformly better
estimators than the MVU and MMSE. Nevertheless, the zeroth order metrics cannot really determine the best
possible bias with respect to the MVU estimator that the estimators in this paper must have in order to yield
the best possible performance against the \emph{true} performance metrics. The bias terms are only optimal with respect to the
zeroth order metrics.

\section{Conclusions}
\label{sec:concl}

In this paper, application-oriented channel estimator selection
has been compared with common channel estimators such as the MVU
and MMSE estimators. We have shown that the application-oriented
selection is the right way to choose estimators in practice. We
have verified this observation based on three different
performance metrics of interest, namely, the symbol estimate MSE,
the excess symbol estimate MSE and the error probability.

\appendix

This section proposes a simplification of the ${\rm
MSE}_{x}^{dc}({\rm ZF})$ metric for the estimator given in
\pref{eq:WellBehEst} with a fixed $\lambda$. Due to the
Gaussianity of $\gss{y}{\rm tr}{}$, ${\rm MSE}_{x}^{dc}({\rm
ZF})=\infty$ for any $\g{f}\neq \g{0}$ (infinite moment problem). Using
\pref{eq:WellBehEst}, the corresponding mean square error becomes:
\begin{eqnarray}
&&\left[{\rm MSE}_{x}^{dc}({\rm ZF})\right]_{\rm reg}={\rm
Pr}\left\{|\gss{f}{}{H}\gss{y}{\rm
tr}{}|>\lambda\right\}\cdot\nonumber\\
&&E\left[\sigma_x^2\left|1-\frac{h}{\gss{f}{}{H}\gss{y}{\rm
tr}{}}\right|^2+\frac{\sigma_w^2}{|\gss{f}{}{H}\gss{y}{\rm
tr}{}|^2}; |\gss{f}{}{H}\gss{y}{\rm
tr}{}|>\lambda\right]\nonumber\\&& + {\rm
Pr}\left\{|\gss{f}{}{H}\gss{y}{\rm tr}{}|\leq \lambda\right\}
\cdot\nonumber\\
&& E\left[\frac{\sigma_x^2}{\lambda^2}\left|\lambda
\frac{\gss{f}{}{H}\gss{y}{\rm tr}{}}{|\gss{f}{}{H}\gss{y}{\rm
tr}{}|}-h\right|^2+\frac{\sigma_w^2}{\lambda^2};
|\gss{f}{}{H}\gss{y}{\rm tr}{}|\leq\lambda\right],
\end{eqnarray}
where $;$ denotes conditioning and ``reg'' signifies the use of the regularized channel estimator in \pref{eq:WellBehEst}.
To simplify this expression, we observe that
${\rm Pr}\left\{|\gss{f}{}{H}\gss{y}{\rm tr}{}|\leq
\lambda\right\}=O(\lambda^2)$, since by the mean value theorem
this probability is equal to the area of the region
$\{|\gss{f}{}{H}\gss{y}{\rm tr}{}|\leq \lambda\}$, which is of
order $O(\lambda^2)$, multiplied by some value of the probability
density function of $|\gss{f}{}{H}\gss{y}{\rm tr}{}|$ in that
region, which is of order $O(1)$. In addition,
\begin{eqnarray}
&&E\left[\frac{\sigma_x^2}{\lambda^2}\left|\lambda
\frac{\gss{f}{}{H}\gss{y}{\rm tr}{}}{|\gss{f}{}{H}\gss{y}{\rm
tr}{}|}-h\right|^2+\frac{\sigma_w^2}{\lambda^2};
|\gss{f}{}{H}\gss{y}{\rm
tr}{}|\leq\lambda\right]=\frac{\sigma_x^2}{\lambda^2}|h|^2\nonumber\\
&&+\frac{\sigma_w^2}{\lambda^2}-2\frac{\sigma_x^2}{\lambda}\Re\left\{
h^{*}\frac{\gss{f}{}{H}\gss{y}{\rm tr}{}}{|\gss{f}{}{H}\gss{y}{\rm
tr}{}|}\right\}\nonumber
\end{eqnarray}
If in addition the SNR during training is sufficiently high and the probability mass of $|\gss{f}{}{H}\gss{y}{\rm tr}{}|$ is
concentrated around $|h|$, then it
can be shown that
\begin{eqnarray}
&& E\left[\sigma_x^2\left|1-\frac{h}{\gss{f}{}{H}\gss{y}{\rm
tr}{}}\right|^2+\frac{\sigma_w^2}{|\gss{f}{}{H}\gss{y}{\rm
tr}{}|^2}|; |\gss{f}{}{H}\gss{y}{\rm
tr}{}|>\lambda\right]\approx \nonumber\\ &&
\frac{\sigma_x^2E[|\gss{f}{}{H}\gss{y}{\rm
tr}{}-h|^2; |\gss{f}{}{H}\gss{y}{\rm
tr}{}|>\lambda]+\sigma_w^2}{E[|\gss{f}{}{H}\gss{y}{\rm
tr}{}|^2; |\gss{f}{}{H}\gss{y}{\rm
tr}{}|>\lambda]}.
\end{eqnarray}
The same holds even if $\gss{f}{}{H}\gss{y}{\rm tr}{}$ is a biased estimator of $h$ at high training SNR and $|\gss{f}{}{H}\gss{y}{\rm tr}{}|$ tends to concentrate
around a value $\alpha$ bounded away from $|h|$ (and of course from $0$).

To show the last claim, we set $X=|\gss{f}{}{H}\gss{y}{\rm tr}{}-h|^2$ and $Y=|\gss{f}{}{H}\gss{y}{\rm tr}{}|^2$. Since $Y>\lambda^2$, it also holds that $E\left[Y\right]>\lambda^2$. Furthermore, it can be seen that
\begin{align}
\left|E\left[\frac{X}{Y}\right]-\frac{E[X]}{E[Y]}\right|\leq \frac{1}{\lambda^4}E\left[\left|XE[Y]-YE[X]\right|\right].\label{eq:Stochineq}
\end{align}
At high training SNR, $X\rightarrow E[X]$ and $Y\rightarrow E[Y]$ in the mean square sense and therefore it can be easily shown that the right hand side of (\ref{eq:Stochineq}) converges to $0$. To see this, notice that the Cauchy-Schwarz inequality yields
\begin{align}\label{eq:Cauchyineq1}
&\frac{1}{\lambda^4}E\left[\left|XE[Y]-YE[X]\right|\right]\leq \frac{1}{\lambda^4}\left(E\left[\left|XE[Y]-YE[X]\right|^2\right]\right)^{1/2}\nonumber\\
&=\frac{1}{\lambda^4}\left(E^2[Y]E[X^2]+E[Y^2]E^2[X]-2E[XY]E[X]E[Y]\right)^{1/2}.
\end{align}
Since $X\rightarrow E[X]$ and $Y\rightarrow E[Y]$ in the mean square sense, $E[X^2]\rightarrow E^2[X]$, $E[Y^2]\rightarrow E^2[Y]$ and $E[XY]\rightarrow E[X]E[Y]$.
For the last case, notice that
\begin{align}\label{eq:Cauchyineq2}
\left|E[XY]-E[X]E[Y]\right|&=\left|E\left[(X-E[X])(Y-E[Y])\right]\right|\nonumber\\&\leq E\left[\left|X-E[X]\right|\left|Y-E[Y]\right|\right]\nonumber\\ & \leq \sqrt{E\left[\left|X-E[X]\right|^2\right]E\left[\left|Y-E[Y]\right|^2\right]},
\end{align}
where the last inequality follows again from the Cauchy-Schwarz inequality.
By the mean square convergence of $X$ to $E[X]$ and $Y$ to $E[Y]$ the right hand side of (\ref{eq:Cauchyineq2}) tends to $0$.
Therefore, the right hand side of (\ref{eq:Cauchyineq1}) tends to $0$.

Furthermore, under the high SNR assumption the conditional
expectations can be approximated by their unconditional ones,
since for a sufficiently small $\lambda$ their difference is due to an event of probability
$O(\lambda^2)$. Therefore,
\begin{eqnarray}
&& E\left[\sigma_x^2\left|1-\frac{h}{\gss{f}{}{H}\gss{y}{\rm
tr}{}}\right|^2+\frac{\sigma_w^2}{|\gss{f}{}{H}\gss{y}{\rm
tr}{}|^2}|; |\gss{f}{}{H}\gss{y}{\rm
tr}{}|>\lambda\right]\approx\nonumber\\ &&
\left\{\frac{\sigma_x^2E[|\gss{f}{}{H}\gss{y}{\rm
tr}{}-h|^2]+\sigma_w^2}{E[|\gss{f}{}{H}\gss{y}{\rm
tr}{}|^2]}\right\}+O(\lambda^2).
\end{eqnarray}
Combining all the above results yields
\begin{equation}\label{eq:MSExdcApp}
{\rm MSE}_x^{dc}({\rm
ZF})\approx\left\{\frac{\sigma_x^2E[|\gss{f}{}{H}\gss{y}{\rm
tr}{}-h|^2]+\sigma_w^2}{E[|\gss{f}{}{H}\gss{y}{\rm
tr}{}|^2]}\right\}+O(1).
\end{equation}
The $O(1)$ term is not negligible but for sufficiently small $\lambda$ its dependence on $\g{f}$ is insignificant. Hence,
for a sufficiently small $\lambda$ and a sufficiently high SNR during training,
minimizing ${\rm MSE}_x^{dc}({\rm ZF})$ is equivalent to
minimizing the following approximation
\begin{equation}
\left[{\rm MSE}_x^{dc}\left({\rm
ZF}\right)\right]_0=\frac{E\left[|\hat{h}-h|^2\right]}{E\left[|\hat{h}|^2\right]}\sigma_x^2+\sigma_w^2\frac{1}{E\left[|\hat{h}|^2\right]}.\label{eq:MSEddcZF0_App}
\end{equation}

%\newpage

\end{document}